\documentclass[12pt,onecolumn]{IEEEtran}
\usepackage{amsmath,amssymb,amsfonts}
\usepackage[ruled,vlined,linesnumbered]{algorithm2e}
\usepackage{array}
\usepackage[caption=false,font=normalsize,labelfont=sf,textfont=sf]{subfig}
\usepackage{textcomp}
\usepackage{stfloats}
\usepackage{url}
\usepackage{verbatim}
\usepackage{graphicx}
\usepackage{cite}

\usepackage{microtype}  
\usepackage{hyphenat}   
\usepackage{siunitx}

\usepackage{amssymb,mathrsfs}
\usepackage{euscript}
\usepackage{dsfont}
\usepackage{booktabs}
\usepackage{url}

\usepackage[hidelinks]{hyperref}

\usepackage{tikz}
\usetikzlibrary{arrows.meta, positioning}

\newtheorem{theorem}{\textbf{Theorem}}[section]
\newtheorem{corollary}[theorem]{\textbf{Corollary}}
\newtheorem{definition}[theorem]{\textbf{Definition}}

\newtheorem{lemma}[theorem]{\textbf{Lemma}}

\newtheorem{proposition}[theorem]{\textbf{Proposition}}
\newtheorem{remark}[theorem]{\textbf{Remark}}

\newenvironment{proof}[1][Proof]{\noindent\textbf{#1.} }{\ \rule{0.5em}{0.5em}}

\renewcommand{\leq}{\leqslant}
\renewcommand{\le}{\leqslant}

\renewcommand{\ge}{\geqslant}
\renewcommand{\epsilon}{\varepsilon}

\newcommand{\F}{\mathbb{F}}

\newcommand{\fq}{\F_q}
\newcommand{\fqm}{\F_{q^m}}

\newcommand{\GL}{\mathsf{GL}}

\DeclareMathOperator{\rk}{\mathrm{rank}} 

\newcommand{\word}[1]{\boldsymbol{\mathrm{#1}}}

\newcommand{\cv}{\word{c}}

\newcommand{\ev}{\word{e}}

\newcommand{\gv}{\word{g}}

\newcommand{\sv}{\word{s}}
\newcommand{\uv}{\word{u}}
\newcommand{\vv}{\word{v}}

\newcommand{\xv}{\word{x}}
\newcommand{\yv}{\word{y}}

\newcommand{\zz}{\word{0}}

\newcommand{\lambdav}{\word{\lambda}}

\newcommand{\mat}[1]{\boldsymbol{\mathrm{#1}}}
\newcommand{\Am}{\mat{A}}
\newcommand{\Bm}{\mat{B}}
\newcommand{\Cm}{\mat{C}}

\newcommand{\Em}{\mat{E}}

\newcommand{\Gm}{\mat{G}}
\newcommand{ \Hm}{\mat{H}}
\renewcommand{\Im}{\mat{I}}

\newcommand{\Mm}{\mat{M}}
\newcommand{\Nm}{\mat{N}}
\newcommand{\Pm}{\mat{P}}
\newcommand{\Qm}{\mat{Q}}

\newcommand{\Um}{\mat{U}}

\newcommand{\Ym}{\mat{Y}}

\newcommand{\ZZ}{\mat{0}}

\newcommand{\CC}{\mathcal{C}}

\newcommand{\Tr}{\operatorname{Tr}}

\newcommand{\phiBvec}{\phi_B^{\mathrm{vec}}}
\newcommand{\phiBmat}{\phi_B^{\mathrm{mat}}}

\newcommand{\Fold}{\operatorname{Fold}}
\newcommand{\Unfold}{\operatorname{Unfold}}
\newcommand{\Stab}{\operatorname{Stab}}

\newcommand{\wt}{\mathrm{wt}}

\newcommand{\Cmat}{\mathcal{C}_{\mathrm{mat}}}

\newcommand{\Ann}{\operatorname{Ann}}

\DeclareMathOperator{\End}{End}
\DeclareMathOperator{\Ker}{Ker}
\newcommand{\Lq}{\mathcal{L}_{q^m}}
\begin{document}

\title{Subcodes of Lambda-Gabidulin Codes for Compact-Ciphertext Cryptography}

\author{Freddy Lend\'e Metouk\'e,
        Herv\'e Tal\'e Kalachi,
        Hermann Tchatchiem Kamche,
        Ousmane Ndiaye,
        Sélestin Ndjeya          
\thanks{Freddy Lend\'e Metouk\'e is with the Department of Mathematics, Faculty of Science, University of
Yaounde I, Cameroon email : metoukefreddy@gmail.com.}
\thanks{Herv\'e Tal\'e Kalachi is with the Department of Computer Engineering, National Advanced School of Engineering of Yaound\'e, University of Yaounde I, Cameroon e-mail: hervekalachi@gmail.com.}
\thanks{Hermann Tchatchiem Kamche is with the Centre for Cybersecurity and Mathematical Cryptology, The University of Bamenda, Bamenda, Cameroon e-mail: hermann.tchatchiem@gmail.com.}
\thanks{Ousmane Ndiaye is with the Universit\'e Cheikh Anta Diop de Dakar, FST, DMI, LACGAA, Senegal e-mail: ousmane3.ndiaye@ucad.edu.sn.}
\thanks{Sélestin Ndjeya is with the Department of Mathematics, Higher Teacher Training College, University of
Yaounde I, Cameroon e-mail : ndjeyas@yahoo.fr.}
}



\maketitle

\begin{abstract}
This paper investigates subcodes of lambda-Gabidulin codes, viewed as
rank-metric analogues of generalized Reed--Solomon codes, and their
applications to compact-ciphertext cryptosystems. We first analyze subspace and
generalized subspace subcodes of lambda-Gabidulin codes and relate them to
corresponding subcodes of classical Gabidulin codes through coordinate-wise
scaling. This relation yields cardinality bounds and structural properties for
these families. When the extension degree equals the code length, we further
characterize Gabidulin subspace subcodes in terms of linearized polynomials,
which gives an explicit description of their encoding and dimension. We also
study the matrix images of these subcodes over the base field through their
stabilizer and annihilator algebras, showing that subspace restrictions may
preserve nontrivial algebraic invariants despite the loss of extension-field
linearity. Motivated by these results, we propose a generator-matrix-based
construction of random subcodes designed to avoid such
invariants.
This construction is then used to design McEliece-like and Niederreiter-like
encryption schemes in the MinRank setting. Among the parameter sets considered
in this work, the most compact ciphertexts are obtained from random subcodes of classical Gabidulin codes. At the 128-, 192-, and
256-bit security levels, the resulting $\mathsf{LGS}$-Niederreiter instances
achieve the smallest ciphertext sizes among the compared schemes, while
maintaining competitive public-key sizes.
\end{abstract}


\begin{IEEEkeywords}
Code-based cryptography, rank-metric codes, Gabidulin codes, lambda-Gabidulin codes, subcodes, MinRank problem, public-key encryption.
\end{IEEEkeywords}

\section{Introduction}\label{sec:intro}

The rank metric was first introduced by Delsarte in 1978 in the setting of matrix codes over finite fields~\cite{delsarte78}. A few years later, Gabidulin developed the extension-field viewpoint and introduced the family now known as Gabidulin codes, together with their main structural properties and a decoding algorithm~\cite{gab85}. These codes are maximum rank distance codes, since they attain the Singleton bound in the rank metric. Shortly thereafter, Gabidulin, Paramonov, and Tretjakov proposed the GPT cryptosystem \cite{GPT91}, the first public-key cryptosystem based on rank-metric codes and a rank-metric analogue of the McEliece construction~\cite{M78}.

One of the main attractions of the rank metric in code-based cryptography is that the best known generic decoding attacks remain significantly harder in practice than their counterparts in the Hamming metric for comparable parameters~\cite{puchinger2022}. As a result, rank-metric cryptosystems can operate with much smaller code parameters and often achieve substantially smaller public keys in practice~\cite{gadouleau2006,bartz2022}. In principle, rank-metric codes therefore offer an appealing route toward compact code-based encryption. These advantages, however, can only be realized if the algebraic structure of the secret code is sufficiently well hidden in the public key, and this masking problem turned out to be the main obstacle for early Gabidulin-based proposals.

This difficulty was already visible in the original GPT cryptosystem. Its main weakness was precisely the challenge of hiding the strong algebraic structure of the underlying Gabidulin code. Gibson's attack~\cite{gib95} showed that this structure could already be exploited in the original proposal, and most subsequent variants were later broken by Overbeck's attacks~\cite{O05,O08} and their extensions~\cite{otmani18,horlemann18,kalachi22}. In retrospect, this vulnerability is perhaps not entirely surprising. Gabidulin codes are rank-metric analogues of Reed--Solomon codes, and McEliece-type cryptosystems based on Reed--Solomon-like structures have also been shown to admit powerful structural attacks~\cite{CGGOT14}. By contrast, the original McEliece cryptosystem based on binary Goppa codes has largely resisted efficient structural cryptanalysis for several decades in practical parameter ranges, even though Goppa codes can be viewed as subfield subcodes of generalized Reed--Solomon codes~\cite{delsarte1975,weger2022}. This contrast naturally suggests investigating subcodes of Gabidulin codes as possible rank-metric counterparts of the passage from generalized Reed--Solomon codes to Goppa codes. This perspective has already motivated several works on subcodes of Gabidulin codes.

Gabidulin and Loidreau initiated the study of subspace and subfield subcodes in the rank metric~\cite{gabsub05,gabprop08}. Their work was later continued by Gabidulin and Pilipchuk, who emphasized the relevance of subspace subcodes to network coding~\cite{gabNWC13}. More recently, Liu \emph{et al.} studied random Gabidulin subcodes from the point of view of list decoding and showed that they can, with overwhelming probability, achieve list-decodability behavior close to that of random rank-metric codes near the relevant Gilbert--Varshamov bound~\cite{liuASC16}. Subcodes have also been considered directly for cryptographic design. Berger \emph{et al.}~\cite{bergerMC17} proposed a cryptosystem based on matrix Gabidulin subcodes and highlighted the possibility of obtaining very short ciphertexts, an especially desirable feature in communication-constrained settings. Later, Guo \emph{et al.}~\cite{24two} introduced Loidreau-type variants based on Gabidulin subcodes. Taken together, these contributions indicate that subcodes are a natural way to weaken visible algebraic structure while retaining efficient decoding, and they also show that compact ciphertexts can be a decisive advantage in practical rank-metric cryptography.

In 2019, Lau and Tan introduced lambda-Gabidulin codes as a generalized Gabidulin family, which may be viewed as rank-metric analogues of generalized Reed--Solomon codes, and proposed a McEliece-type cryptosystem based on them~\cite{LT19}. These codes enlarge the design space beyond classical Gabidulin codes and were proposed to mitigate the structural weaknesses exploited by known attacks. More recently, however, several parameter sets of such constructions were shown to be vulnerable to structural cryptanalysis~\cite{burle26}. Although some parameters still resist the attacks currently known, these developments naturally raise the question of whether one should study subcodes of lambda-Gabidulin codes rather than the full family itself. From this viewpoint, subcodes of lambda-Gabidulin codes may provide a more plausible rank-metric analogue of the transition from generalized Reed--Solomon codes to Goppa codes. This question is particularly appealing because it combines two objectives: weakening visible algebraic structure while preserving the practical efficiency advantages that make rank-metric cryptography attractive.

In this paper, we investigate subcodes of lambda-Gabidulin codes from both a structural and a cryptographic viewpoint. On the structural side, we relate generalized subspace subcodes of lambda-Gabidulin codes to suitable subcodes of classical Gabidulin codes, derive cardinality bounds, and introduce a simple generator-matrix-based construction of random \(\F_q\)-linear subcodes. We also study the matrix images of these subcodes through their stabilizer and annihilator algebras, leading to structural distinguishers and to criteria for identifying restricted subcode families that remain algebraically visible.
On the cryptographic side, we use random \(\F_q\)-linear subcodes of lambda-Gabidulin codes and of their classical Gabidulin specialization to design McEliece-like and Niederreiter-like encryption schemes in the MinRank setting. Among the resulting instantiations, the best ciphertext-size tradeoffs are obtained in the classical Gabidulin case, while the broader lambda-Gabidulin framework provides additional design flexibility. In particular, the proposed Niederreiter-like construction achieves very compact ciphertexts together with competitive public-key sizes, improving the ciphertext-size tradeoff of previous rank-metric subcode-based proposals such as those of Berger \emph{et al.}~\cite{bergerMC17} and Guo \emph{et al.}~\cite{24two}, and comparing favorably with the enhanced matrix-code framework of Aragon \emph{et al.}~\cite{aragonMR24}.

The remainder of the paper is organized as follows. Section~\ref{sec:prelims}
recalls the necessary background on rank-metric codes, subcodes, and
lambda-Gabidulin codes. Section~\ref{sec:sub_subcodeLGab} studies subcodes of
lambda-Gabidulin and Gabidulin codes and establishes their main structural
properties. Section~\ref{stab_ssc} investigates the stabilizer and annihilator
algebras of matrix images of subcodes and derives the corresponding structural
distinguishers. Section~\ref{sec:crypto} presents the proposed cryptographic
constructions. Section~\ref{sec_secur} analyzes structural and decoding
attacks. Section~\ref{sec_prop_cle} provides parameter sets and comparisons
with related schemes. Finally, Section~\ref{sec:conclusion} concludes the
paper.

\section{Preliminaries}\label{sec:prelims}
This section is structured as follows. We begin with a brief overview of the rank metric, followed by a presentation of Gabidulin and $\lambdav$-Gabidulin codes. We conclude with some properties of subcodes in the rank metric.
 
\subsection{Matrix and Vector Representations of Rank-Metric Codes}
The rank metric was originally introduced by Delsarte~\cite{delsarte78}
in the study of matrix codes over finite fields. Let $q$ be a power of a prime,
let $\fq$ be the finite field with $q$ elements, and let $\fqm$ be its extension of degree $m$.
Throughout this subsection, we fix an $\fq$-basis $B=(b_1,\dots,b_m)$
of $\fqm$.

\begin{definition}[Matrix code]
A \emph{matrix code} is an $\F_q$-linear subspace
$\Cmat$ of the $\F_q$-vector space $\F_q^{m\times n}$ of all
$m\times n$ matrices over~$\F_q$.
The \emph{rank weight} of a matrix $\Mm\in \F_q^{m\times n}$ is defined by
\[
\wt_{\mathrm{R}}(\Mm):=\rk(\Mm),
\]
and the associated \emph{rank distance} on $\F_q^{m\times n}$ is
\[
d(\Mm,\Nm):=\rk(\Mm-\Nm)
\qquad
\text{for all } \Mm,\Nm\in\F_q^{m\times n}.
\]
The \emph{dual matrix code} of $\Cmat$ is
\[
\Cmat^\perp
:=
\left\{
\Am\in\F_q^{m\times n}
\;\middle|\;
\Tr(\Am\Bm^\top)=0
\text{ for all } \Bm\in\Cmat
\right\},
\]
where $\Tr(\cdot)$ denotes the ordinary trace of a square matrix over~$\F_q$ and $\Bm^\top$ the transpose of the matrix $\Bm$.
\end{definition}

\begin{definition}[Vector rank-metric code]\label{def:vector_rank_metric}
A \emph{vector rank-metric code} of length $n$ over $\F_{q^m}$ is an $\F_{q^m}$-linear subspace
$\CC\subseteq \F_{q^m}^n.$
For a vector
$\cv=(c_1,\dots,c_n) \in \F_{q^m}^n$,
its \emph{rank weight} is
\[
\rk_{\mathrm{wt}}(\cv)
:=
\dim_{\F_q}\bigl(\langle c_1,\dots,c_n\rangle_{\F_q}\bigr),
\]
where $\langle c_1,\dots,c_n\rangle_{\F_q}$ denotes the $\F_q$-subspace of $\F_{q^m}$
generated by the coordinates of $\cv$.
The associated \emph{rank distance} on $\F_{q^m}^n$ is
\[
d(\cv,\cv')
=
\rk_{\mathrm{wt}}(\cv-\cv')
\qquad
\text{for all } \cv,\cv'\in\F_{q^m}^n.
\]

The \emph{dual code} of $\CC$ is
\[
\CC^\perp
:=
\left\{
\cv\in\F_{q^m}^n
\;\middle|\;
\sum_{i=1}^n c_i c_i'=0
\text{ for all } \cv'=(c_1',\dots,c_n')\in\CC
\right\}.
\]
\end{definition}

\paragraph{\bf Expansion Maps}
Every element $x\in\fqm$ can be written uniquely as
\[
x=\sum_{j=1}^m x_j b_j,
\qquad x_j\in\fq.
\]
This defines the $\fq$-linear map
\[
\phi_B:\fqm\longrightarrow\fq^m,
\qquad
x\longmapsto (x_1,\dots,x_m).
\]
Extending componentwise, we obtain the \emph{vector expansion map}
\[
\phiBvec:\fqm^n\longrightarrow\fq^{mn},
\qquad
(x_1,\dots,x_n)\longmapsto
\bigl(\phi_B(x_1),\dots,\phi_B(x_n)\bigr),
\]
and the \emph{matrix expansion map}
\[
\phiBmat:\fqm^n\longrightarrow\fq^{m\times n},
\qquad
(x_1,\dots,x_n)\longmapsto
\bigl(\phi_B(x_1)^\top,\dots,\phi_B(x_n)^\top\bigr),
\]
whose columns are the $q$-ary expansions of the coordinates.

\begin{remark}\label{rem:rank_expansion}
For every $\xv\in\F_{q^m}^n$, one has
$\rk\bigl(\phiBmat(\xv)\bigr)=\rk_{\mathrm{wt}}(\xv).$
\end{remark}
In the sequel, we often write $\rk(\xv)$ in place of $\rk_{\mathrm{wt}}(\xv)$ when no confusion is possible.
Let $\CC\subseteq \F_{q^m}^n$ be an $\F_{q^m}$-linear code generated by
$\{\gv_1,\dots,\gv_k\}$. Since every coefficient in $\F_{q^m}$ expands uniquely on the basis
$B=(b_1,\dots,b_m)$, the code $\CC$, viewed as an $\F_q$-linear space, is generated by
\[
\{\, b_i \gv_j \mid 1\le i\le m,\; 1\le j\le k \,\}.
\]
Since $\phiBvec$ and $\phiBmat$ are $\F_q$-linear isomorphisms,
the codes $\phiBvec(\CC)$ and $\phiBmat(\CC)$ are generated by
\[
\{\phiBvec(b_i\gv_j)\}_{1\le i\le m,\;1\le j\le k}
\quad\text{and}\quad
\{\phiBmat(b_i\gv_j)\}_{1\le i\le m,\;1\le j\le k},
\]
respectively.

\paragraph{\bf Fold and Unfold}
Since the $\fq$-vector spaces $\fq^{mn}$ and $\fq^{m\times n}$ are canonically isomorphic,
we freely switch between vector and matrix representations depending on the context.
For
\[
\vv=(v_{11},\dots,v_{1m},v_{21},\dots,v_{2m},\dots,v_{n1},\dots,v_{nm})\in\fq^{mn},
\]
its \emph{folding} is
\[
\Fold(\vv)=
\begin{pmatrix}
v_{11} & v_{21} & \cdots & v_{n1}\\
\vdots & \vdots & \ddots & \vdots\\
v_{1m} & v_{2m} & \cdots & v_{nm}
\end{pmatrix}
\in\fq^{m\times n}.
\]
The inverse map is denoted by $\Unfold$ and, 
one can notice that for every $\xv\in\fqm^n$,
\[
\Fold\bigl(\phiBvec(\xv)\bigr)=\phiBmat(\xv),
\qquad
\Unfold\bigl(\phiBmat(\xv)\bigr)=\phiBvec(\xv).
\]

Figure \ref{fig} summarizes the connections between $\F_{q^m}^n, \F_q^{mn}, \F_q^{m\times n}$.  

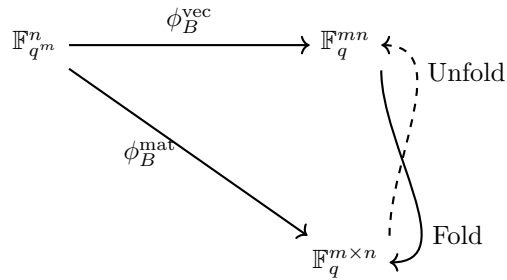
\begin{figure}[ht]
\centering
\begin{tikzpicture}[
    node distance=3.2cm,
    every node/.style={font=\small},
    arrow/.style={->, thick},
    foldarrow/.style={->, thick, black},
    unfoldarrow/.style={->, thick, black, dashed}
]

\node (Fqm) {$\mathbb{F}_{q^m}^n$};
\node (Fqmn) [right=of Fqm] {$\mathbb{F}_q^{mn}$};
\node (Fqmxn) [below=2.2cm of Fqmn] {$\mathbb{F}_q^{m\times n}$};

\draw[arrow] (Fqm) -- node[above] {$\phiBvec$} (Fqmn);
\draw[arrow] (Fqm) -- node[left] {$\phiBmat$} (Fqmxn);

\draw[foldarrow] (Fqmn.south east) to[out=-90, in=0] 
    node[right, pos=0.7] {\emph{$\Fold$}} 
    (Fqmxn.east);

\draw[unfoldarrow] (Fqmxn.north east) to[out=90, in=0] 
    node[right, pos=0.7] {\emph{$\Unfold$}} 
    (Fqmn.east);

\end{tikzpicture}
\caption{Relation between the vector and matrix representations over $\F_q$.}
\label{fig}
\end{figure}

\begin{remark}\label{rem: vec_unfold}
Let $\Mm,\Nm\in\F_q^{m\times n}$, and let $\mathrm{vec}(\cdot)$ denote the usual vectorization map.
By definition of $\Unfold$, one has
\[
\mathrm{vec}(\Mm)=\Unfold(\Mm)^\top.
\]
Using the classical identity
\[
\mathrm{vec}(\Mm)^\top\mathrm{vec}(\Nm)
=
\Tr(\Mm^\top\Nm),
\]
see~\cite{horn91}, it follows that
\[
\Unfold(\Mm) \Unfold(\Nm)^\top
=
\Tr(\Mm\Nm^\top).
\]
\end{remark}

\subsection{Gabidulin Codes and \texorpdfstring{$\lambdav$}{lambda}-Gabidulin Codes}

Gabidulin codes are rank-metric analogues of Reed--Solomon codes in the Hamming metric. They were introduced by Gabidulin in the extension-field setting and form a family of maximum rank distance codes~\cite{gab85}. We recall their standard definition.

\begin{definition}[Gabidulin code]
Let $n,m,k$ be positive integers such that $n \le m$ and $k \le n$, and let
$\gv=(g_1,\dots,g_n)\in\fqm^n$
be a vector whose coordinates are linearly independent over $\fq$.
The \emph{Gabidulin code} $\mathcal{G}(\gv,k)$ is the $\F_{q^m}$-linear code of length $n$ and dimension $k$ generated by the rows of
\begin{equation}\label{mat_gab}
\Gm =
\begin{pmatrix}
g_1 & g_2 & \cdots & g_n \\
g_1^{[1]} & g_2^{[1]} & \cdots & g_n^{[1]} \\
\vdots & \vdots & \ddots & \vdots \\
g_1^{[k-1]} & g_2^{[k-1]} & \cdots & g_n^{[k-1]}
\end{pmatrix},
\end{equation}
where $x^{[i]} := x^{q^i}$ denotes the $i$-th Frobenius power of $x$.
\end{definition}

Gabidulin codes are maximum rank distance (MRD) codes: their minimum rank distance is
\[
d=n-k+1.
\]
Moreover, the dual of a Gabidulin code is again a Gabidulin code~\cite{gab85}.

Inspired by the role of diagonal multipliers in generalized Reed--Solomon codes, Lau and Tan introduced the family of $\lambdav$-Gabidulin codes~\cite{LT19}.

\begin{definition}[\texorpdfstring{$\lambdav$}{lambda}-Gabidulin code]
Let $\gv=(g_1,\dots,g_n)\in\fqm^n$
be a vector whose coordinates are linearly independent over $\fq$, and let
$\lambdav=(\lambda_1,\dots,\lambda_n)\in(\F_{q^m}^*)^n$.
The \emph{$\lambdav$-Gabidulin code} $\mathcal{G}_{\lambdav}(\gv,k)$ is the $\F_{q^m}$-linear code generated by the rows of
\begin{equation}\label{mat_lam_gab}
\Gm_{\lambdav} =
\begin{pmatrix}
\lambda_1 g_1 & \lambda_2 g_2 & \cdots & \lambda_n g_n \\
\lambda_1 g_1^{[1]} & \lambda_2 g_2^{[1]} & \cdots & \lambda_n g_n^{[1]} \\
\vdots & \vdots & \ddots & \vdots \\
\lambda_1 g_1^{[k-1]} & \lambda_2 g_2^{[k-1]} & \cdots & \lambda_n g_n^{[k-1]}
\end{pmatrix}
=
\Gm\Delta,
\end{equation}
where $\Gm$ is given by~\eqref{mat_gab} and
\[
\Delta=\operatorname{Diag}(\lambda_1,\dots,\lambda_n).
\]
\end{definition}

If $\lambdav=(\lambda,\dots,\lambda)$ for some $\lambda \in \F_{q^m}^*$, then
\[
\mathcal{G}_{\lambdav}(\gv,k)=\mathcal{G}(\gv,k).
\]
Thus classical Gabidulin codes appear as a special case of $\lambdav$-Gabidulin codes.

Lau and Tan showed that the dual of a $\lambdav$-Gabidulin code is again a $\lambdav$-Gabidulin code~\cite[Proposition~1]{LT19}. Moreover, decoding a $\lambdav$-Gabidulin code reduces to decoding the associated Gabidulin code after multiplication by $\Delta^{-1}$. More precisely, if the Gabidulin code $\mathcal{G}(\gv,k)$ corrects up to $t$ rank-metric errors and if
\[
\delta=\rk(\lambdav^{-1})
\text{  with  }
\lambdav^{-1}=(\lambda_1^{-1},\dots,\lambda_n^{-1}),
\]
then the code $\mathcal{G}_{\lambdav}(\gv,k)$ can correct up to
$\left\lfloor \frac{t}{\delta}\right\rfloor$
rank errors~\cite[Proposition~4]{LT19}.

\subsection{Subcodes in the Rank Metric}\label{def_ss_code}

In this subsection, we recall several notions of subcodes in the rank metric that will be used throughout the paper.

\begin{definition}[Subcode]
Let $\mathcal{C}\subseteq \F_{q^m}^n$ be an $\F_{q^m}$-linear code, and let
$\mathcal{D}\subseteq \F_{q^m}^n$ be an $\F_q$-linear code.
We say that $\mathcal{D}$ is a \emph{subcode} of $\mathcal{C}$ if
$\mathcal{D}\subseteq \mathcal{C}.$
\end{definition}

The above definition is equivalent to say that, an $\F_q$-linear code $\mathcal{D}\subseteq \F_{q^m}^n$ is a subcode of $\mathcal{C}\subseteq \F_{q^m}^n$
if and only if there exists an $\F_q$-linear code
$\mathcal{C}'\subseteq \F_{q^m}^n$ such that
$\mathcal{D}=\mathcal{C}\cap \mathcal{C}'$.
Indeed, one may simply take $\mathcal{C}'=\mathcal{D}$.
This observation motivates the study of subcodes obtained by intersecting
$\mathcal{C}$ with suitable $\F_q$-linear subspaces of $\F_{q^m}^n$.

\begin{definition}[Subfield, subspace, and generalized subspace subcodes]
Let $\mathcal{C}\subseteq \F_{q^m}^n$ be an $\F_{q^m}$-linear code.
\begin{enumerate}
    \item The intersection
    $\mathcal{C}\cap \F_q^n$
    is called the \emph{subfield subcode} of $\mathcal{C}$ over $\F_q$~\cite{delsarteSFSC03,stichtenoth09}.

    \item Let $V\subseteq \F_{q^m}$ be an $\F_q$-vector subspace. Then
    $\mathcal{C}\cap V^n$
    is called the \emph{subspace subcode} of $\mathcal{C}$ over $V$,
    where
    $V^n=\underbrace{V\times\cdots\times V}_{n\text{ times}}$~\cite{gabprop08,hattori02,couvreur21}.

    \item Let $V_1,\dots,V_n$ be $\F_q$-vector subspaces of $\F_{q^m}$, and set
    $W=V_1\times\cdots\times V_n$.
    Then
 $\mathcal{C}\cap W$
    is called a \emph{generalized subspace subcode} of $\mathcal{C}$~\cite{ndiaye23,couvreur21}.
\end{enumerate}
\end{definition}

\begin{remark}\label{rem:subspace_limitation}
The restriction to $\F_q^n$ is of limited interest from the rank-metric viewpoint, since every nonzero vector of $\F_q^n$ has rank weight equal to $1$. More generally, if $V\subseteq \F_{q^m}$ is an $\F_q$-subspace of dimension $s$, then every word $\cv\in V^n$ satisfies
$\rk(\cv)\le s$.
Consequently, if $\mathcal{C}$ has minimum rank distance $d$, then the subspace subcode
$\mathcal{C}\cap V^n$
is trivial whenever $s<d$. This motivates the consideration of generalized subspace restrictions, where different coordinates may be restricted to different subspaces.
\end{remark}

Another classical way to construct subcodes is through the vertical concatenation of parity-check matrices, as in~\cite{bergerMC17,24two}. Let $\mathcal{C},\mathcal{C}'\subseteq \F_q^n$ be linear codes with parity-check matrices $\Hm$ and $\Hm'$, respectively. By standard duality arguments~\cite{lang12},
\[
(\mathcal{C}\cap \mathcal{C}')^\perp
=
\mathcal{C}^\perp+\mathcal{C}'^\perp.
\]
Therefore, the dual of $\mathcal{C}\cap \mathcal{C}'$ is generated by the rows of
\[
\widetilde{\Hm}=
\begin{pmatrix}
\Hm'\\ \hline
\Hm
\end{pmatrix}.
\]
This viewpoint underlies subcode constructions based on parity-check matrices.

\section{Generalized Subspace Subcodes of \texorpdfstring{$\lambdav$}{lambda}-Gabidulin Codes}
\label{sec:sub_subcodeLGab}

In this section, we study generalized subspace subcodes of $\lambdav$-Gabidulin codes. We first derive cardinality bounds for such subcodes and then, in the special case where the extension degree equals the code length, we give an algebraic characterization of subspace subcodes of Gabidulin codes.

\subsection{Cardinality Bounds for Generalized Subspace Subcodes of \texorpdfstring{$\lambdav$}{lambda}-Gabidulin Codes}

We begin by relating generalized subspace subcodes of $\lambdav$-Gabidulin codes to suitable generalized subspace subcodes of classical Gabidulin codes.
Throughout this subsection, let $n,m,k$ be positive integers such that $n\le m$, and let
$\mathcal{G}(\gv,k)\subseteq \F_{q^m}^n$
be a Gabidulin code with minimum rank distance
$d=n-k+1$.
Let
$\lambdav=(\lambda_1,\dots,\lambda_n)\in (\F_{q^m}^*)^n$
and define
$\Delta=\operatorname{Diag}(\lambda_1,\dots,\lambda_n)$,
so that
\[
\mathcal{G}_{\lambdav}(\gv,k)=\mathcal{G}(\gv,k)\Delta.
\]
For each $i\in\{1,\dots,n\}$, let $V_i\subseteq \F_{q^m}$ be an $\F_q$-subspace of dimension $s_i$, and set
\[
W:=\prod_{i=1}^n V_i \subseteq \F_{q^m}^n.
\]
We have the following lemma.

\begin{lemma}\label{lemma_fonda}
Let $\psi$ be the $\F_q$-linear map defined by
\[
\psi:\F_{q^m}^n\to \F_{q^m}^n,
\qquad
\cv\mapsto \cv\Delta^{-1}.
\]
Then $\psi$ restricts to an $\F_q$-linear bijection
\[
\psi_{\mid \mathcal{G}_{\lambdav}\cap W}:
\mathcal{G}_{\lambdav}\cap W
\longrightarrow
\mathcal{G}\cap \prod_{i=1}^n (\lambda_i^{-1}V_i).
\]
\end{lemma}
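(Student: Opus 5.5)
The plan is to show that $\psi$ is a bijection on all of $\F_{q^m}^n$ and that it carries each of the two intersected sets onto the corresponding target set. The lemma then follows because a bijection maps intersections onto intersections.

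First, $\psi$ is $\F_{q^m}$-linear, and hence $\F_q$-linear, since it is right multiplication by the invertible diagonal matrix $\Delta^{-1}$. Every $\lambda_i$ lies in $\F_{q^m}^*$, so $\Delta$ is invertible. Therefore $\psi$ is a bijection of $\F_{q^m}^n$, with inverse $\cv\mapsto \cv\Delta$.

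Next, I compute the two images separately.

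\emph{The code.} Since $\mathcal{G}_{\lambdav}(\gv,k)=\mathcal{G}(\gv,k)\Delta$ by definition (the generator matrix is $\Gm\Delta$), we get
\[
\psi\bigl(\mathcal{G}_{\lambdav}\bigr)=\mathcal{G}\Delta\Delta^{-1}=\mathcal{G}.
\]

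\emph{The ambient product.} A vector $\cv=(c_1,\dots,c_n)$ lies in $W$ if and only if $c_i\in V_i$ for every $i$. This holds if and only if $\lambda_i^{-1}c_i\in\lambda_i^{-1}V_i$ for every $i$. Since $\psi(\cv)=(\lambda_1^{-1}c_1,\dots,\lambda_n^{-1}c_n)$, it follows that
\[
\psi(W)=\prod_{i=1}^n(\lambda_i^{-1}V_i).
\]
Each set $\lambda_i^{-1}V_i$ is again an $\F_q$-subspace of dimension $s_i$, because multiplication by a nonzero scalar is an $\F_q$-linear automorphism of $\F_{q^m}$.

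Finally, because $\psi$ is injective, the image of an intersection equals the intersection of the images:
\[
\psi(\mathcal{G}_{\lambdav}\cap W)=\psi(\mathcal{G}_{\lambdav})\cap\psi(W)=\mathcal{G}\cap\prod_{i=1}^n(\lambda_i^{-1}V_i).
\]
The restriction of $\psi$ to $\mathcal{G}_{\lambdav}\cap W$ is therefore an $\F_q$-linear bijection onto this set.

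There is no genuine obstacle here. The only point needing care is the inclusion $\supseteq$ in the last display, which is exactly where injectivity of $\psi$ is used. One could instead check it directly: if $\yv$ lies in the right-hand side, then $\yv\Delta$ lies in both $\mathcal{G}\Delta=\mathcal{G}_{\lambdav}$ and $W$, and $\psi(\yv\Delta)=\yv$.
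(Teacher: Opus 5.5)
Your proposal is correct and takes essentially the same approach as the paper. Invertibility of $\Delta$ makes $\psi$ a bijection, $\psi(\mathcal{G}_{\lambdav})=\mathcal{G}$ because $\mathcal{G}_{\lambdav}=\mathcal{G}\Delta$, and $\psi$ maps $W$ onto $\prod_{i=1}^n(\lambda_i^{-1}V_i)$; your explicit use of injectivity to get $\psi(\mathcal{G}_{\lambdav}\cap W)=\psi(\mathcal{G}_{\lambdav})\cap\psi(W)$ simply spells out the step the paper leaves implicit.
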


\begin{proof}
Since $\Delta\in \GL_n(\F_{q^m})$, the map $\psi$ is a bijection on $\F_{q^m}^n$.
Moreover,
$\psi(\mathcal{G}_{\lambdav})=\mathcal{G}$,
because $\mathcal{G}_{\lambdav}=\mathcal{G}\Delta$.
Finally, for $\cv=(c_1,\dots,c_n)\in W$, we have
\[
\psi(\cv)
=
(\lambda_1^{-1}c_1,\dots,\lambda_n^{-1}c_n)
\in
\prod_{i=1}^n (\lambda_i^{-1}V_i),
\]
and conversely $\psi^{-1}$ maps $\prod_{i=1}^n (\lambda_i^{-1}V_i)$ back to $W$.
Hence the restriction of $\psi$ yields the claimed bijection.
\end{proof}

\begin{theorem}\label{upper_bound_gene}
Assume that $\max_{1\le i\le n}\{s_i\}-d+1>0$.
Then the generalized subspace subcode $\mathcal{G}_{\lambdav}\cap W$ satisfies
\[
q^{\sum_{i=1}^n s_i - m(n-k)}
\le
\bigl|\mathcal{G}_{\lambdav}\cap W\bigr|
\le
q^{m(\max_{1\le i\le n}\{s_i\}-d+1)}.
\]

Moreover, if $s_1=\cdots=s_n=s$ and $m=n$, then the two bounds coincide, and the code
$\mathcal{G}_{\lambdav}(\gv,k)\cap W$
has $\F_q$-dimension
$k' = km + sm - m^2$.
\end{theorem}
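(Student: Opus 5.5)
Via Lemma~\ref{lemma_fonda}, $\bigl|\mathcal{G}_{\lambdav}\cap W\bigr| = \bigl|\mathcal{G}\cap W'\bigr|$ with $W' = \prod_{i=1}^n(\lambda_i^{-1}V_i)$, and each $\lambda_i^{-1}V_i$ is again an $\F_q$-subspace of dimension $s_i$. So I would prove both bounds directly for the Gabidulin code $\mathcal{G}=\mathcal{G}(\gv,k)$ and an arbitrary product $W'$ of subspaces with the same dimensions. Write $D=\mathcal{G}\cap W'$.

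\textbf{Lower bound.} I would use a dimension count over $\F_q$. As an $\F_q$-space, $\mathcal{G}$ has dimension $km$ and $W'$ has dimension $\sum_i s_i$, both inside $\F_{q^m}^n$ of dimension $mn$. Grassmann's formula then gives
\[
\dim_{\F_q} D \ \ge\ km+\sum_i s_i - mn \ =\ \sum_i s_i - m(n-k),
\]
which is the left inequality. Equivalently, $D$ is the kernel of the composition $W'\hookrightarrow \F_{q^m}^n \to \F_{q^m}^n/\mathcal{G}$, and the target has $\F_q$-dimension $m(n-k)$.

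\textbf{Upper bound.} Set $s=\max_i s_i$ and assume $s\ge d$; otherwise every word of $D$ has rank at most $s<d$, so $D=\{0\}$ and the bound holds trivially. I would use a puncturing argument in the spirit of the Singleton bound. Each coordinate of a codeword of $D$ lies in an $s_i$-dimensional space, but this constraint alone does not bound the rank usefully, so I would instead puncture the code. Deleting any $d-1$ coordinates is injective on $\mathcal{G}$, since a nonzero codeword has rank weight, and hence Hamming weight, at least $d$. Thus $D$ embeds $\F_q$-linearly into its image on the remaining $n-d+1=k$ coordinates, giving
\[
|D| \le q^{\sum_{i\in I} s_i} \qquad \text{for every } I \text{ with } |I|=k.
\]
Choosing $I$ to be the $k$ coordinates with the smallest $s_i$ gives $|D|\le q^{\min_{|I|=k}\sum_{i\in I}s_i}\le q^{ks}$.

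This is not the stated bound $q^{m(s-d+1)}$, and recovering that exact form is the step I expect to be the main obstacle. When $s\ge d$ the two compare as follows: with $n\le m$, $s-d+1=s-n+k$, so
\[
m(s-n+k) - ks = (m-k)s - m(n-k) \ \ge\ (m-k)(n-k+1) - m(n-k).
\]
This need not be nonnegative, so the inequality does not follow automatically. My first alternative would be a rank-Singleton argument. Pick $\F_q$-linear maps $\pi_i:\F_{q^m}\to\F_q^{s_i}$ that are injective on the $V_i$, and embed $D$ into a matrix code of matrices whose columns lie in $s$-dimensional spaces. Then apply the rank-metric Singleton bound to this matrix code of minimum distance at least $d$, viewing it inside $s\times n$ matrices with $n\le m$. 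That gives $|D|\le q^{\max(s,n)(\min(s,n)-d+1)}$, which is at most $q^{m(s-d+1)}$ because $s\le m$, $n\le m$, and $\min(s,n)\le s$. I would carry out this route, making sure the column-space choice is uniform across coordinates. If that causes trouble, I would instead send all the $V_i$ into a common $s$-dimensional space by changing the basis of $\F_{q^m}$ coordinatewise, checking that ranks do not drop too much.

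\textbf{Equality case.} For $s_i=s$ and $m=n$, the lower exponent is $ns-m(n-k)=m(s+k-n)$. The upper exponent is $m(s-d+1)=m(s-n+k)$. These agree, so $\dim_{\F_q}D = m(s+k-n) = km+sm-m^2$, and the same holds for $\mathcal{G}_{\lambdav}\cap W$ through the bijection of Lemma~\ref{lemma_fonda}.
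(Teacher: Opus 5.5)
The reduction to $\mathcal{G}$ via Lemma~\ref{lemma_fonda} is exactly the paper's first step. Your Grassmann argument is a correct, self-contained proof of the lower bound; the paper simply cites \cite[Theorem~3]{ndiaye23} for both bounds. \textbf{The upper bound, however, is not proved, and the route you propose cannot work.} Applying a coordinate-dependent map $\pi_i$ to $c_i$ does not preserve rank weight. The $\F_q$-span of $(\pi_1(c_1),\dots,\pi_n(c_n))$ has no fixed relation to that of $(c_1,\dots,c_n)$, so nothing forces the image matrix code to have minimum rank distance $d$. Your fallback, a coordinatewise change of basis, has the same defect: coordinatewise scaling is exactly what turns $\mathcal{G}$ into a $\lambdav$-Gabidulin code, which is in general not MRD. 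Your remark that $D=\{0\}$ when $\max_i s_i<d$ is also wrong for distinct $V_i$. A word of $\prod_i V_i$ only has rank at most $\dim_{\F_q}(V_1+\cdots+V_n)$. For example, take $q=2$, $m=n=2$, $k=1$, $\alpha\in\F_4\setminus\F_2$, $\gv=(1,\alpha)$, $V_1=\F_2$, $V_2=\F_2\alpha$; then the nonzero word $(1,\alpha)$ lies in $D$. Your equality case inherits the gap, since it uses the upper bound.

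No argument can close this gap, because the upper bound and the dimension formula are false once the transferred spaces $\lambda_i^{-1}V_i$ are not all equal. Take $q=2$, $m=n=3$, $k=2$ (so $d=2$) and $\lambdav=(1,1,1)$.

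The dual of $\mathcal{G}(\gv,2)$ is spanned by some $h=(h_1,h_2,h_3)$ with every $h_i\neq 0$; if some $h_i=0$, the $i$-th unit vector would be a rank-one codeword. Hence $\mathcal{G}=\{c : h_1c_1+h_2c_2+h_3c_3=0\}$.

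Fix a $2$-dimensional $\F_2$-subspace $U\subseteq\F_8$ and set $V_i=h_i^{-1}U$. Then $s_i=2$ and $\max_i s_i-d+1=1>0$. The map $c\mapsto(h_1c_1,h_2c_2,h_3c_3)$ sends $\mathcal{G}\cap W$ bijectively onto $\{u\in U^3 : u_1+u_2+u_3=0\}$, which has $\F_2$-dimension $4$. Thus $|\mathcal{G}\cap W|=2^4$. The statement instead claims $|\mathcal{G}\cap W|\le 2^{m(\max_i s_i-d+1)}=2^3$ and $\dim_{\F_2}(\mathcal{G}\cap W)=km+sm-m^2=3$. Equivalently, $\lambdav=(h_1,h_2,h_3)$ with $V_i=U$ for all $i$ contradicts Remark~\ref{upper_bound}.

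Your puncturing bound $q^{\min_{|I|=k}\sum_{i\in I}s_i}=2^4$ is valid and attained here. The quantity you computed, $(m-k)s-m(n-k)=-1<0$, is exactly where the claimed bound breaks.

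The stated upper bound, and with it the equality case, does hold when all $\lambda_i^{-1}V_i$ equal a single $V'$, for instance for subspace subcodes of classical Gabidulin codes. Then every codeword has rank at most $s$, and your rank-Singleton estimate goes through using one basis of $\F_{q^m}$ adapted to $V'$. The paper's one-line proof offers no argument for the general case either; it defers entirely to \cite[Theorem~3]{ndiaye23}.
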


\begin{proof}
By Lemma~\ref{lemma_fonda},
\[
\bigl|\mathcal{G}_{\lambdav}\cap W\bigr|
=
\left|\mathcal{G}\cap \prod_{i=1}^n (\lambda_i^{-1}V_i)\right|.
\]
The announced bounds then follow directly from \cite[Theorem~3]{ndiaye23} applied to the Gabidulin code $\mathcal{G}$.
\end{proof}

\begin{remark}\label{upper_bound}
If $V_1=\cdots=V_n=V$ and $\dim_{\F_q}(V)=s$, then the previous theorem specializes to
\[
q^{ns-m(n-k)}
\le
|\mathcal{G}_{\lambdav}\cap V^n|
\le
q^{m(s-d+1)},
\]
which recovers the corresponding bound for Gabidulin codes~\cite{gabprop08}.
\end{remark}

\subsection{\texorpdfstring{$q$}{q}-Polynomials and Subspace Subcodes}

We now recall some basic facts on $q$-polynomials and use them to study subspace subcodes.

\begin{definition}[\cite{ore1933theory}]
A \emph{$q$-polynomial} (or \emph{linearized polynomial}) over $\F_{q^m}$ is a polynomial of the form
\[
P(x)=\sum_{i=0}^d a_i x^{q^i},
\qquad a_i\in\F_{q^m}.
\]
Such a polynomial induces an $\F_q$-linear map on $\F_{q^m}$.
If $P(x)=\sum_{i=0}^d a_i x^{q^i}$ is nonzero with $a_d\neq 0$, its \emph{$q$-degree} is
\[
\deg_q(P)=d.
\]
By convention, $\deg_q(0)=-\infty$.
\end{definition}

Let $\mathcal{L}_{q^m}$ denote the set of $q$-polynomials over $\F_{q^m}$, equipped with addition and composition~$\circ$.  
The next lemma collects the properties of $q$-polynomials that will be needed later in the analysis of stabilizer algebras.

\begin{lemma}\label{lem:central_commute}
Let $V$ be an $\F_q$-subspace of $\F_{q^m}$ of dimension $s$, with $0<s<m$. Then the following properties hold.
\begin{enumerate}
    \item[(i)] There exists a nonzero $q$-polynomial $P\in\mathcal{L}_{q^m}$ of $q$-degree $s$ such that
    \[
    P(v)=0 \qquad \text{for all } v\in V.
    \]
    Moreover, $P$ is unique up to multiplication by a nonzero scalar in $\F_{q^m}$.

    \item[(ii)] Let $P$ be as in (i). Then there exists a $q$-polynomial $Q\in\mathcal{L}_{q^m}$ of $q$-degree $m-s$ such that
    \[
    Q\circ P=x^{q^m}-x.
    \]
    Moreover,
    \[
    P\circ Q=Q\circ P.
    \]
\end{enumerate}
\end{lemma}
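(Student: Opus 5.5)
For (i) I will use the subspace polynomial. Fix an $\F_q$-basis $v_1,\dots,v_s$ of $V$ and set
\[
P(x)=\prod_{v\in V}(x-v).
\]
The classical Moore-determinant argument shows that $P$ is a $q$-polynomial of $q$-degree $s$. Equivalently, $P$ is a nonzero multiple of $\det$ of the $(s+1)\times(s+1)$ Moore matrix in $v_1,\dots,v_s,x$. Expanding along the last column gives a linear combination of $x,x^{q},\dots,x^{q^s}$. The coefficient of $x^{q^s}$ is the Moore determinant of $v_1,\dots,v_s$, which is nonzero because the $v_j$ are $\F_q$-independent. This $P$ vanishes on $V$.

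For uniqueness, let $P'$ be any nonzero $q$-polynomial of $q$-degree $s$ vanishing on $V$. Then $P'$ has ordinary degree $q^s$ and at least $|V|=q^s$ roots, so $P'=c\prod_{v\in V}(x-v)=cP$ with $c$ its leading coefficient. (One could equally note that $P'-cP$ has $q$-degree $<s$ and $q^s$ roots, so it must be $0$.)

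For (ii) I will use right division in the skew ring $(\Lq,+,\circ)$. This ring admits right division by nonzero elements in the composition sense: for any $A$ and $P\neq0$ there are $Q,R$ with $A=Q\circ P+R$ and $\deg_q R<\deg_q P$. The proof is the usual leading-term elimination, using that $a x^{q^i}\circ b x^{q^j}=a b^{q^i}x^{q^{i+j}}$ and that Frobenius is bijective on $\F_{q^m}$, so one can solve $b^{q^i}=\cdot$.

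Apply this with $A=x^{q^m}-x$. Evaluating at $v\in V$ gives
\[
0=v^{q^m}-v=Q(P(v))+R(v)=Q(0)+R(v)=R(v),
\]
so $R$ vanishes on $V$, which has $q^s$ elements. Since $\deg_q R<s$, $R$ has ordinary degree $<q^s$, hence $R=0$. Comparing $q$-degrees in $Q\circ P=x^{q^m}-x$ gives $\deg_q Q=m-s$.

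For commutation, note that $x^{q^m}-x$ is central in $\Lq$. Indeed, for every coefficient $a\in\F_{q^m}$ we have $a^{q^m}=a$, so $(x^{q^m}-x)\circ L=L\circ(x^{q^m}-x)$ for every $L$ by direct computation on monomials. Then
\[
P\circ(Q\circ P)=P\circ(x^{q^m}-x)=(x^{q^m}-x)\circ P=(Q\circ P)\circ P,
\]
so $(P\circ Q-Q\circ P)\circ P=0$. The last step is to cancel $P$ on the right. Since $\Lq$ has no zero divisors (leading coefficients multiply as $a b^{q^i}\neq0$), $(P\circ Q-Q\circ P)\circ P=0$ with $P\neq0$ forces $P\circ Q=Q\circ P$.

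The main obstacle is this commutation step: it depends on the centrality of $x^{q^m}-x$ and on the cancellation property of the skew polynomial ring. Everything else is standard root counting.
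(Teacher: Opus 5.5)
Your proof is correct and follows the paper's route. The commutation step is exactly the paper's argument: centrality of $x^{q^m}-x$, associativity, then right cancellation of $P$ in the domain $\Lq$. Where the paper simply cites Ore and McDonald for the existence and uniqueness of $P$ and the existence of $Q$, you give self-contained proofs: the Moore determinant, root counting, right division with remainder, and a monomial check of centrality.

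One harmless remark: in the right division $A=Q\circ P+R$, the next quotient coefficient $c$ solves the field equation $c\,p^{q^{d-s}}=a$, where $a$ and $p$ are the leading coefficients of the current dividend and of $P$. So bijectivity of Frobenius is not actually needed there; it would be needed only for left division $A=P\circ Q+R$.
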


\begin{proof}
(i) The existence of such a polynomial follows from~\cite{ore1933special}. 

(ii) Since $P$ vanishes on $V$ and $x^{q^m}-x$ vanishes on all of $\F_{q^m}$, \cite{ore1933special} implies that there exists a $q$-polynomial $Q\in\mathcal{L}_{q^m}$ of $q$-degree $m-s$ such that
\[
Q\circ P=x^{q^m}-x.
\]

Now, $x^{q^m}-x$ is a central element of $\mathcal{L}_{q^m}$ (see~\cite[Theorem~II-18]{mcdonald74}), that is,
\[
(x^{q^m}-x)\circ R=R\circ(x^{q^m}-x)
\qquad\text{for all }R\in\mathcal{L}_{q^m}.
\]
Applying this with $R=P$ gives
\[
(Q\circ P)\circ P=P\circ(Q\circ P).
\]
By associativity,
\[
(Q\circ P)\circ P=(P\circ Q)\circ P.
\]
Since $\mathcal{L}_{q^m}$ is an integral domain and $P\neq 0$, we may cancel the right factor $P$ and obtain
\[
Q\circ P=P\circ Q.
\]

\end{proof}

The evaluation of $q$-polynomials on a support vector gives an equivalent description of Gabidulin codes:
\[
\mathcal{G}(\gv,k)
=
\left\{
\bigl(P(g_1),\dots,P(g_n)\bigr)
\;\middle|\;
P\in\mathcal{L}_{q^m},\ \deg_q(P)<k
\right\}.
\]
The next proposition specializes this description to subspace subcodes in the case $m=n$.

\begin{proposition}\label{carac_ssc}
Assume that $m=n$. Let $\mathcal{G}=\mathcal{G}(\gv,k)$ be a Gabidulin code with support
$\gv=(g_1,\dots,g_n)\in\F_{q^m}^n$, whose coordinates are linearly independent over $\F_q$, and let $V\subseteq\F_{q^m}$ be an $\F_q$-subspace of dimension $s$.
Then there exists a $q$-polynomial $Q\in\mathcal{L}_{q^m}$ of $q$-degree $m-s$ such that
\[
\mathcal{G}\cap V^n
=
\left\{
\bigl((Q\circ A)(g_1),\dots,(Q\circ A)(g_n)\bigr)
\;\middle|\;
A\in\mathcal{L}_{q^m},\ \deg_q(A)<k-(m-s)
\right\}.
\]
\end{proposition}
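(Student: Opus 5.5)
The plan is to turn membership in $V^n$ into a condition on the image of the evaluating $q$-polynomial, and then factor that polynomial through the annihilator of $V$ using Lemma~\ref{lem:central_commute}.

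\emph{Reduction to an image condition.} Since $m=n$ and the $g_i$ are $\F_q$-linearly independent, $(g_1,\dots,g_n)$ is an $\F_q$-basis of $\F_{q^m}$. Every codeword of $\mathcal{G}$ has the form $\cv=(F(g_1),\dots,F(g_n))$ with $\deg_q F<k$, and $F$ is $\F_q$-linear on $\F_{q^m}$. Hence
\[
\cv\in V^n \iff F(\F_{q^m})\subseteq V.
\]
This is the only place where $m=n$ is used.

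\emph{Choice of $Q$ and its image.} Let $P$ be the $q$-polynomial of $q$-degree $s$ vanishing on $V$ from Lemma~\ref{lem:central_commute}(i). A $q$-polynomial of $q$-degree $s$ has at most $q^s$ roots and vanishes on the $q^s$ elements of $V$, so $\ker P=V$ exactly. Take $Q$ from Lemma~\ref{lem:central_commute}(ii), so that
\[
P\circ Q=Q\circ P=x^{q^m}-x .
\]
Then $P\circ Q$ vanishes on $\F_{q^m}$, which gives $Q(\F_{q^m})\subseteq\ker P=V$. Consequently, for any $A$ with $\deg_q A<k-(m-s)$, the polynomial $Q\circ A$ has $q$-degree $<k$ and image in $V$. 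This proves the inclusion ``$\supseteq$''.

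\emph{The converse, which is the main step.} Let $\deg_q F<k$ with $F(\F_{q^m})\subseteq V$. Then $P\circ F$ vanishes at every point of $\F_{q^m}$. We then argue as follows.
\begin{enumerate}
\item[(a)] Use right division by $x^{q^m}-x$ in $\Lq$: the remainder has $q$-degree $<m$ and vanishes at all $q^m$ points, so it is zero. Hence $P\circ F=R\circ(x^{q^m}-x)$ for some $R\in\Lq$.
\item[(b)] Since $x^{q^m}-x$ is central, this can be rewritten as
\[
P\circ F=(x^{q^m}-x)\circ R=P\circ Q\circ R .
\]
\item[(c)] Cancel $P$ on the left, which is allowed because $\Lq$ is an integral domain. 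This gives $F=Q\circ R$.
\item[(d)] Comparing $q$-degrees (they add under composition) gives $\deg_q R=\deg_q F-(m-s)<k-(m-s)$. So $F$ has the required form with $A=R$.
\end{enumerate}

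The delicate point is exactly steps (a)--(c). Directly matching $F$ with a factorization through $Q$ only as $\F_q$-linear maps would give no control on the $q$-degree of $A$ as a polynomial. Going through the central element $x^{q^m}-x$ keeps everything at the level of polynomials, where degrees add. In the degenerate case $k\le m-s$, the same argument shows both sides reduce to $\{\zz\}$.
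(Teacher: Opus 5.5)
Your proposal is correct and follows essentially the same argument as the paper. You take $P,Q$ from Lemma~\ref{lem:central_commute} with $P\circ Q=Q\circ P=x^{q^m}-x$, use $m=n$ to see that $P\circ F$ vanishes on all of $\F_{q^m}$, and factor it as $A\circ(x^{q^m}-x)=P\circ Q\circ A$ using centrality. You then cancel $P$ and compare $q$-degrees. Your root-count check that $\ker P=V$ exactly and your remainder argument for the factorization just fill in details that the paper attributes to Ore.
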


\begin{proof}
By Lemma~\ref{lem:central_commute}, there exist $P,Q\in\mathcal{L}_{q^m}$ such that
\[
V=\ker(P),~ Q\circ P=x^{q^m}-x,~ \deg_q(P)=s,~ \deg_q(Q)=m-s.
\]

Let $\cv\in\mathcal{G}\cap V^n$. Then there exists a $q$-polynomial $F\in\mathcal{L}_{q^m}$ with $\deg_q(F)<k$ such that
\[
\cv=(F(g_1),\dots,F(g_n)).
\]
Since $\cv\in V^n=\ker(P)^n$, we have
\[
P(F(g_i))=0 \qquad \text{for all } i=1,\dots,n.
\]
Hence $(P\circ F)(g_i)=0$ for all $i$.

Because $m=n$ and the coordinates of $\gv$ are $\F_q$-linearly independent, they form an $\F_q$-basis of $\F_{q^m}$. Therefore $P\circ F$ vanishes on all of $\F_{q^m}$. By~\cite{ore1933special}, there exists $A\in\mathcal{L}_{q^m}$ such that
\[
P\circ F=A\circ(x^{q^m}-x).
\]
Since $x^{q^m}-x$ is central in $\mathcal{L}_{q^m}$,
\[
A\circ(x^{q^m}-x)=(x^{q^m}-x)\circ A=(Q\circ P)\circ A.
\]
Using Lemma~\ref{lem:central_commute}, we also have $P\circ Q=Q\circ P$, hence
\[
P\circ F=P\circ(Q\circ A).
\]
Since $\mathcal{L}_{q^m}$ is an integral domain and $P\neq 0$, we get
\[
F=Q\circ A.
\]
Moreover,
\[
\deg_q(A)=\deg_q(F)-\deg_q(Q)<k-(m-s).
\]
This proves one inclusion.

Conversely, let $A\in\mathcal{L}_{q^m}$ satisfy $\deg_q(A)<k-(m-s)$. Then
\[
\deg_q(Q\circ A)<k,
\]
so
\[
\bigl((Q\circ A)(g_1),\dots,(Q\circ A)(g_n)\bigr)\in\mathcal{G}.
\]
Furthermore,
\[
P((Q\circ A)(g_i))
=
(P\circ Q\circ A)(g_i)
=
((x^{q^m}-x)\circ A)(g_i)
=
0
\]
for all $i$, so this word belongs to $V^n$. Hence it lies in $\mathcal{G}\cap V^n$.
\end{proof}

\begin{remark}\label{rem:encoding_ssc}
\begin{enumerate}
    \item Proposition~\ref{carac_ssc} provides a direct description of the Gabidulin subspace subcode $\mathcal{G}\cap V^n$ in terms of the original support $\gv$ and $q$-polynomials, without passing through the parent-code framework of~\cite[Section~B]{gabprop08}.

    \item Moreover, if $k-m+s>0$, then the above parametrization shows that
    \[
    |\mathcal{G}\cap V^n| = q^{m(k-m+s)},
    \]
    and therefore
    \[
    \dim_{\F_q}(\mathcal{G}\cap V^n)=m(k-m+s).
    \]
    This recovers the corresponding dimension formula in~\cite{gabprop08}.

    \item Although $\mathcal{G}\cap V^n$ is in general only $\F_q$-linear, Proposition~\ref{carac_ssc} shows that it still retains a strong algebraic structure inherited from the extension field through the $q$-polynomial parametrization. In particular, the codewords are described by evaluations of compositions $Q\circ A$ with $A\in\mathcal{L}_{q^m}$, which reveals a residual extension-field structure behind the subspace restriction. This makes the existence of nontrivial structural invariants, and hence of potential distinguishers, less surprising.
\end{enumerate}
\end{remark}

\section{Stabilizer and Annihilator Algebra of Subspace Subcodes of Gabidulin Codes}\label{stab_ssc}

When an $\F_{q^m}$-linear code is expanded into a matrix code over $\F_q$, its hidden $\F_{q^m}$-linear structure may be revealed through non-trivial algebraic invariants, such as stabilizer \cite{aragonMR24}. This observation underlies the distinguisher of~\cite{aragonMR24} for Gabidulin matrix codes. To study this stabilizer algebra on the subcodes, we will first give a method to compute a dimension of this stabilizer algebra. Then, we will study the stabilizers of subspace subcodes and random subcodes.

\subsection{Stabilizer and Annihilator Algebras}

In this subsection, we recall the relevant definitions and derive a linear-algebraic characterization of the left stabilizer and left annihilator of a matrix code.

\begin{definition}
Let $\Cmat\subseteq \F_q^{m\times n}$ be a matrix code.
\begin{enumerate}
    \item The \emph{left stabilizer algebra} of $\Cmat$ is
    \[
    \Stab_L(\Cmat)
    :=
    \left\{
    \Pm\in \F_q^{m\times m}
    \;\middle|\;
    \Pm\Cm\in \Cmat
    \text{ for all } \Cm\in \Cmat
    \right\}.
    \]

    \item The \emph{left annihilator algebra} of $\Cmat$ is
    \[
    \Ann_L(\Cmat)
    :=
    \left\{
    \Am\in \F_q^{m\times m}
    \;\middle|\;
    \Am\Cm=\mathbf{0}
    \text{ for all } \Cm\in \Cmat
    \right\}.
    \]

    \item The \emph{right stabilizer algebra} and \emph{right annihilator algebra} are defined similarly, and are denoted by $\Stab_R(\Cmat)$ and $\Ann_R(\Cmat)$.
\end{enumerate}
\end{definition}

\begin{remark}
\begin{enumerate}
    \item One always has
    \[
    \Ann_L(\Cmat)\subseteq \Stab_L(\Cmat).
    \]
    Thus, the annihilator may be viewed as a substructure of the stabilizer algebra, consisting of linear transformations that act trivially on all codewords.

    \item As observed in~\cite{aragonMR24}, the existence of an $\F_{q^m}$-linear structure may be detected through a nontrivial right stabilizer algebra. Although subspace subcodes lose global $\F_{q^m}$-linearity, such codes may still exhibit nontrivial stabilizer algebras.

    \item For a random matrix code, one typically expects the annihilator algebras to
be trivial, namely reduced to $\{\ZZ\}$, and the stabilizer algebras to be
minimal, namely reduced to the scalar matrices:
\[
\Stab_L(\Cmat)=\{\alpha \Im_m \mid \alpha\in\F_q\},
\qquad
\Stab_R(\Cmat)=\{\alpha \Im_n \mid \alpha\in\F_q\},
\]
with overwhelming probability.
\end{enumerate}
\end{remark}

\begin{proposition}[Left stabilizer and annihilator via linear systems]\label{stab_ann_linear_system}
Let $\Cmat\subseteq \F_q^{m\times n}$ be an $\F_q$-linear matrix code of dimension $k'$, with ordered basis
\[
(\Gm_1,\dots,\Gm_{k'}).
\]
Let $(\Hm_1,\dots,\Hm_{mn-k'})$ be a basis of the dual code $\Cmat^\perp$.
\begin{enumerate}
    \item A matrix $\Am\in \F_q^{m\times m}$ belongs to $\Stab_L(\Cmat)$ if and only if, for all
    \(i=1,\dots,mn-k'\) and \(j=1,\dots,k'\),
    \[
    \Unfold(\Hm_i)(\Gm_j^\top\otimes \Im_m)\operatorname{vec}(\Am)=0.
    \]
    Furthermore, let $\Mm_S$ be the matrix obtained by stacking all row vectors
    \[
    \Unfold(\Hm_i)(\Gm_j^\top\otimes \Im_m),
    \qquad
    1\le i\le mn-k',\; 1\le j\le k',
    \]
    Then
    \[
    \dim_{\F_q}\bigl(\Stab_L(\Cmat)\bigr)=m^2-\rk(\Mm_S).
    \]

    \item A matrix $\Am\in \F_q^{m\times m}$ belongs to $\Ann_L(\Cmat)$ if and only if, for all
    \(j=1,\dots,k'\),
    \[
    (\Gm_j^\top\otimes \Im_m)\operatorname{vec}(\Am)=0.
    \]
    Additionally, if we define
    \[
    \Mm_S'=
    \begin{pmatrix}
    \Gm_1^\top\otimes \Im_m\\
    \vdots\\
    \Gm_{k'}^\top\otimes \Im_m
    \end{pmatrix}
    \in \F_q^{k'mn\times m^2},
    \]
    Then
    \[
    \dim_{\F_q}\bigl(\Ann_L(\Cmat)\bigr)=m^2-\rk(\Mm_S').
    \]
\end{enumerate}
\end{proposition}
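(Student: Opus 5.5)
The plan is to translate both membership conditions into linear conditions on $\operatorname{vec}(\Am)$. We use two ingredients: the duality characterization of membership in $\Cmat$, and the Kronecker identity $\operatorname{vec}(\Mm\Xm\Nm)=(\Nm^\top\otimes\Mm)\operatorname{vec}(\Xm)$. Dimensions then follow from rank–nullity.

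\textbf{Step 1 (reduction to a basis).} The map $\Cm\mapsto\Am\Cm$ is $\F_q$-linear and $\Cmat$ is $\F_q$-linear. Hence $\Am\in\Stab_L(\Cmat)$ if and only if $\Am\Gm_j\in\Cmat$ for every $j=1,\dots,k'$. Likewise, $\Am\in\Ann_L(\Cmat)$ if and only if $\Am\Gm_j=\ZZ$ for every $j$.

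\textbf{Step 2 (membership via the dual).} Since $\dim\Cmat^\perp=mn-k'$ and the trace form is nondegenerate on $\F_q^{m\times n}$, we have $(\Cmat^\perp)^\perp=\Cmat$. Therefore a matrix $\Nm$ lies in $\Cmat$ if and only if $\Tr(\Hm_i\Nm^\top)=0$ for all $i$. By Remark~\ref{rem: vec_unfold}, this trace equals $\Unfold(\Hm_i)\Unfold(\Nm)^\top=\Unfold(\Hm_i)\operatorname{vec}(\Nm)$.

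\textbf{Step 3 (Kronecker rewriting).} Writing $\Am\Gm_j=\Am\,\Gm_j\,\Im_n$ in the form $\Mm\Xm\Nm$ with $\Mm=\Im_m$, $\Xm=\Am$, $\Nm=\Gm_j$, the identity gives
\[
\operatorname{vec}(\Am\Gm_j)=(\Gm_j^\top\otimes\Im_m)\operatorname{vec}(\Am).
\]
Substituting $\Nm=\Am\Gm_j$ into Step~2 yields exactly the stated conditions $\Unfold(\Hm_i)(\Gm_j^\top\otimes\Im_m)\operatorname{vec}(\Am)=0$. For the annihilator, $\Am\Gm_j=\ZZ$ holds if and only if $\operatorname{vec}(\Am\Gm_j)=0$, which gives the second system.

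\textbf{Step 4 (dimensions).} The map $\operatorname{vec}$ is an $\F_q$-isomorphism $\F_q^{m\times m}\to\F_q^{m^2}$. It identifies $\Stab_L(\Cmat)$ with $\ker\Mm_S$ and $\Ann_L(\Cmat)$ with $\ker\Mm_S'$. Rank–nullity then gives $m^2-\rk(\Mm_S)$ and $m^2-\rk(\Mm_S')$.

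\textbf{Main obstacle.} The only delicate point is bookkeeping of conventions. We must check that $\Unfold$ stacks the columns of an $m\times n$ matrix in the same order as the column-stacking $\operatorname{vec}$ used in the Kronecker identity, as asserted in Remark~\ref{rem: vec_unfold}. We must also check that the transpose in the definition of $\Cmat^\perp$ matches the product $\Unfold(\Hm_i)\operatorname{vec}(\cdot)$. I would verify both on the definition of $\Fold$ before writing the identities.
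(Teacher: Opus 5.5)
Your proposal is correct and follows essentially the same route as the paper. Both arguments reduce to the basis matrices $\Gm_j$, test membership of $\Am\Gm_j$ in $\Cmat$ against the dual basis $\Hm_i$ via $\Tr(\Mm\Nm^\top)=\Unfold(\Mm)\Unfold(\Nm)^\top$, apply $\operatorname{vec}(\Am\Gm_j)=(\Gm_j^\top\otimes\Im_m)\operatorname{vec}(\Am)$, and conclude by rank--nullity. Your explicit appeal to biduality $(\Cmat^\perp)^\perp=\Cmat$ and your convention check make precise what the paper uses implicitly; the check does go through, since $\Unfold$ concatenates the columns exactly as column-stacking $\operatorname{vec}$ does.
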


\begin{proof}
We first consider the stabilizer. By definition,
\[
\Am\in \Stab_L(\Cmat)
\quad\Longleftrightarrow\quad
\Am\Gm_j\in \Cmat,
\text{ for all } j=1,\dots,k'.
\]
Since $\Cmat^\perp$ is the dual of $\Cmat$, this is equivalent to
\[
\Tr\bigl(\Hm_i(\Am\Gm_j)^\top\bigr)=0
\qquad
\text{for all } i=1,\dots,mn-k',\; j=1,\dots,k'.
\]
By Remark~\ref{rem: vec_unfold},
\[
\Tr\bigl(\Hm_i(\Am\Gm_j)^\top\bigr)
=
\Unfold(\Hm_i)\Unfold(\Am\Gm_j)^\top.
\]
Using the vectorization identity
\[
\operatorname{vec}(\Am\Gm_j)
=
(\Gm_j^\top\otimes \Im_m)\operatorname{vec}(\Am)
\qquad \text{(see~\cite[Lemma~4.3.1]{horn91})},
\]
we obtain
\[
\Tr\bigl(\Hm_i(\Am\Gm_j)^\top\bigr)
=
\Unfold(\Hm_i)(\Gm_j^\top\otimes \Im_m)\operatorname{vec}(\Am).
\]
Hence the stabilizer is exactly the solution space of the corresponding homogeneous linear system. The dimension formula follows from the rank--nullity theorem.

For the annihilator,
\[
\Am\in \Ann_L(\Cmat)
\quad\Longleftrightarrow\quad
\Am\Gm_j=\mathbf{0}
\text{ for all } j=1,\dots,k'.
\]
Vectorizing gives
\[
(\Gm_j^\top\otimes \Im_m)\operatorname{vec}(\Am)=0
\qquad \text{for all } j=1,\dots,k',
\]
which yields the matrix $\Mm_S'$. The dimension formula again follows from the rank--nullity theorem.
\end{proof}

Proposition~\ref{stab_ann_linear_system} shows that both the left stabilizer and the left annihilator of a matrix code can be computed through explicit homogeneous linear systems. Analogous constructions apply to the right stabilizer and right annihilator algebras.

\subsection{Stabilizer Algebra of Subspace Subcodes}

Subspace subcodes are only $\F_q$-linear, since restricting to an $\F_q$-subspace
$V\subseteq \F_{q^m}$ breaks the ambient $\F_{q^m}$-linearity. One might therefore
expect the algebraic invariants associated with the extension-field structure to
disappear. The next theorem shows that this is not necessarily the case. In
particular, subspace subcodes of Gabidulin codes, and more generally subspace
restrictions of linear codes over $\F_{q^m}$, may still retain large stabilizer
or annihilator algebras after expansion to matrix form. 

\begin{theorem}[Structural properties of stabilizer and annihilator algebras]
\label{thm:iterative_stab_ann}
Let $\gv\in\fqm^n$ be a support vector, let $\mathcal{G}(\gv,k)$ be a Gabidulin code, and let
$V\subseteq\fqm$ be an $\fq$-subspace of dimension $s$, with $0<s<m$.
Let $B$ be a fixed $\fq$-basis of $\fqm$. Then the following properties hold.
\begin{enumerate}
    \item If $m=n$, then the expanded subspace subcode
    \[
    \phiBmat\bigl(\mathcal{G}(\gv,k)\cap V^n\bigr)\subseteq \fq^{m\times n}
    \]
    admits a right stabilizer algebra containing an $\fq$-subalgebra isomorphic to $\fqm$. In particular,
    \[
    \dim_{\fq}\Bigl(\Stab_R\bigl(\phiBmat(\mathcal{G}(\gv,k)\cap V^n)\bigr)\Bigr)\ge m.
    \]

    \item For every $\fq$-linear code $\CC\subseteq\fqm^n$,
    \[
    \dim_{\fq}\Bigl(\Ann_L\bigl(\phiBmat(\CC\cap V^n)\bigr)\Bigr)\ge m(m-s).
    \]
    Consequently,
    \[
    \dim_{\fq}\Bigl(\Stab_L\bigl(\phiBmat(\CC\cap V^n)\bigr)\Bigr)\ge m(m-s)+1.
    \]
\end{enumerate}
\end{theorem}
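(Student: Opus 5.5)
For the first statement, the plan is to use the explicit parametrization of Proposition~\ref{carac_ssc}. Since $m=n$, write
\[
\mathcal{G}(\gv,k)\cap V^n=\{(F(g_1),\dots,F(g_n)) \mid F=Q\circ A,\ \deg_q(A)<k-(m-s)\},
\]
where $Q$ has $q$-degree $m-s$. Because $g_1,\dots,g_n$ form an $\fq$-basis of $\fqm$, each $\alpha\in\fqm$ defines a matrix $\mat{T}_\alpha\in\fq^{n\times n}$ by
\[
\alpha g_i=\sum_{j}(\mat{T}_\alpha)_{ji}\,g_j .
\]
This is the matrix of multiplication by $\alpha$ in the basis $\gv$. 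The map $\alpha\mapsto \mat{T}_\alpha$ is an injective $\fq$-algebra homomorphism, so its image is a subalgebra of $\fq^{n\times n}$ isomorphic to $\fqm$. (The algebra is commutative, so the order of products does not matter.)

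The next step is to check that each $\mat{T}_\alpha$ lies in the right stabilizer. Take a codeword $\cv=(F(g_i))_i$ with $F=Q\circ A$. By $\fq$-linearity of $F$,
\[
(\cv\mat{T}_\alpha)_i=\sum_j (\mat{T}_\alpha)_{ji}F(g_j)=F(\alpha g_i)=(Q\circ A_\alpha)(g_i),
\qquad A_\alpha(x):=A(\alpha x).
\]
Here $A_\alpha=A\circ(\alpha x)$ is again a $q$-polynomial, with coefficients $a_j\alpha^{q^j}$ and the same $q$-degree as $A$. Hence $\cv\mat{T}_\alpha$ belongs to $\mathcal{G}\cap V^n$ by Proposition~\ref{carac_ssc}. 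Since $\mat{T}_\alpha$ has entries in $\fq$, we have $\phiBmat(\cv\mat{T}_\alpha)=\phiBmat(\cv)\mat{T}_\alpha$, because right multiplication by an $\fq$-matrix only mixes columns. This gives the claimed subalgebra and the bound $\dim\ge m$. The only delicate point is this bookkeeping: right multiplication must correspond to precomposition by $\alpha x$ and stay inside the degree bound. I expect this to be the main, though mild, obstacle.

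For the second statement, let $U:=\phi_B(V)\subseteq\fq^m$, a subspace of dimension $s$. For any $\cv\in\CC\cap V^n$, every column of $\phiBmat(\cv)$ lies in $U$. Any matrix $\Am\in\fq^{m\times m}$ whose rows all lie in $U^\perp$ (the orthogonal of $U$ for the standard bilinear form) satisfies $\Am\uv=\zz$ for all $\uv\in U$. Such an $\Am$ therefore annihilates every codeword matrix. These matrices form a space of dimension $m\cdot\dim U^\perp=m(m-s)$, which gives the annihilator bound.

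For the stabilizer bound, I will use $\Ann_L\subseteq\Stab_L$ together with $\Im_m\in\Stab_L$. If the code is nonzero, then $\Im_m\notin\Ann_L$, so the stabilizer contains $\Ann_L\oplus\fq\Im_m$ and has dimension at least $m(m-s)+1$. If the code is zero, the stabilizer is all of $\fq^{m\times m}$, of dimension $m^2\ge m(m-s)+1$ since $s\ge 1$.
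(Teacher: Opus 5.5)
Your proposal is correct and follows essentially the paper's own route. Part 1 uses the $Q\circ A$ parametrization of Proposition~\ref{carac_ssc} together with the matrices of multiplication by $\alpha$ in the basis $\gv$, exactly as the paper does. Part 2 is the paper's argument written in coordinates: matrices whose rows all lie in $\phi_B(V)^\perp$ are exactly the matrices, in basis $B$, of endomorphisms vanishing on $V$. The stabilizer bound then uses the same case split ($\Im_m\notin\Ann_L$ for a nonzero code, $\Stab_L=\fq^{m\times m}$ for the zero code).

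One harmless imprecision, which the paper shares, is that $A\circ(\alpha x)$ keeps the $q$-degree of $A$ only for $\alpha\neq 0$. For $\alpha=0$ it is the zero polynomial, which still satisfies the degree bound.
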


\begin{proof}
\begin{enumerate}
    \item By Proposition~\ref{carac_ssc}, there exists a $q$-polynomial $Q\in\Lq$ of $q$-degree $m-s$
    such that codewords of $\mathcal{G}(\gv,k)\cap V^n$ are precisely the vectors of the form
    \[
    \bigl((Q\circ A)(g_1),\dots,(Q\circ A)(g_n)\bigr),
    \qquad
    A\in\Lq,\ \deg_q(A)<k-(m-s).
    \]
    Since $m=n$ and the coordinates of $\gv$ are $\fq$-linearly independent, the tuple
    $(g_1,\dots,g_n)$ is an $\fq$-basis of $\fqm$.

    For each $\alpha\in\fqm$, let $\Nm_\alpha\in\fq^{n\times n}$ be the matrix of the $\fq$-linear map
    \[
    L_\alpha:\fqm\longrightarrow\fqm,\qquad x\longmapsto \alpha x
    \]
    with respect to the basis $(g_1,\dots,g_n)$.

    Let
    \[
    \cv=\bigl((Q\circ A)(g_1),\dots,(Q\circ A)(g_n)\bigr)\in \mathcal{G}(\gv,k)\cap V^n.
    \]
    Since $Q\circ A$ is $\fq$-linear, one has
    \[
    \cv \Nm_\alpha
    =
    \bigl((Q\circ A)(\alpha g_1),\dots,(Q\circ A)(\alpha g_n)\bigr)
    =
    \bigl((Q\circ A\circ L_\alpha)(g_1),\dots,(Q\circ A\circ L_\alpha)(g_n)\bigr).
    \]
    Moreover,
    \[
    \deg_q(A\circ L_\alpha)=\deg_q(A)<k-(m-s),
    \]
    so $\cv \Nm_\alpha\in \mathcal{G}(\gv,k)\cap V^n$.
    Since $\Nm_\alpha\in\fq^{n\times n}$, this implies
    \[
    \phiBmat(\cv)\Nm_\alpha\in \phiBmat\bigl(\mathcal{G}(\gv,k)\cap V^n\bigr).
    \]
    Hence
    \[
    \Nm_\alpha\in \Stab_R\bigl(\phiBmat(\mathcal{G}(\gv,k)\cap V^n)\bigr)
    \qquad\text{for every }\alpha\in\fqm.
    \]

    Finally, the map
    \[
    \fqm\longrightarrow \fq^{n\times n},
    \qquad
    \alpha\longmapsto \Nm_\alpha
    \]
    is an injective $\fq$-algebra morphism. Therefore,
    \[
    \{\Nm_\alpha\mid \alpha\in\fqm\}
    \]
    is an $\fq$-subalgebra of
    \[
    \Stab_R\bigl(\phiBmat(\mathcal{G}(\gv,k)\cap V^n)\bigr)
    \]
    isomorphic to $\fqm$. In particular,
    \[
    \dim_{\fq}\Bigl(\Stab_R\bigl(\phiBmat(\mathcal{G}(\gv,k)\cap V^n)\bigr)\Bigr)\ge m.
    \]

    \item Let
\[
\mathcal{T}_V
:=
\bigl\{
T\in\End_{\fq}(\fqm)
\;\big|\;
V\subseteq\Ker(T)
\bigr\},
\]
where \(\End_{\fq}(\fqm)\) denotes the \(\fq\)-algebra of \(\fq\)-linear endomorphisms of \(\fqm\).
Choose an $\fq$-basis $(e_1,\dots,e_m)$ of $\fqm$ adapted to $V$, namely such that
\[
V=\langle e_1,\dots,e_s\rangle_{\fq}.
\]
Then an endomorphism $T\in\End_{\fq}(\fqm)$ belongs to $\mathcal{T}_V$ if and only if
the first $s$ columns of its matrix in the basis $(e_1,\dots,e_m)$ are zero.
Hence
\[
\dim_{\fq}(\mathcal{T}_V)=m(m-s).
\]

For each $T\in\mathcal{T}_V$, let $\Mm_T\in\fq^{m\times m}$ be the matrix of $T$
with respect to the fixed basis $B$.
Let $\cv=(c_1,\dots,c_n)\in \CC\cap V^n$. Since $c_i\in V$ for all $i$, one has
\[
T(c_i)=0
\qquad\text{for all }i=1,\dots,n.
\]
Now, by definition of $\phiBmat$, the $i$-th column of $\phiBmat(\cv)$ is
$\phi_B(c_i)^\top$.
Since $\Mm_T$ is the matrix of the $\fq$-linear map $T$ in the basis $B$, the
$i$-th column of $\Mm_T\,\phiBmat(\cv)$ is
\[
\Mm_T\,\phi_B(c_i)^\top=\phi_B(T(c_i))^\top=\zz.
\]
Therefore every column of $\Mm_T\,\phiBmat(\cv)$ is zero, and thus
\[
\Mm_T\,\phiBmat(\cv)=\ZZ.
\]
Hence
\[
\Mm_T\in \Ann_L\bigl(\phiBmat(\CC\cap V^n)\bigr).
\]

Since the map
\[
\mathcal{T}_V\longrightarrow \fq^{m\times m},
\qquad
T\longmapsto \Mm_T
\]
is injective and $\fq$-linear, its image is an $\fq$-subspace of
\[
\Ann_L\bigl(\phiBmat(\CC\cap V^n)\bigr)
\]
of dimension $m(m-s)$. Therefore,
\[
\dim_{\fq}\Bigl(\Ann_L\bigl(\phiBmat(\CC\cap V^n)\bigr)\Bigr)\ge m(m-s).
\]

Let
\[
\Cmat:=\phiBmat(\CC\cap V^n).
\]
Since $\Ann_L(\Cmat)\subseteq \Stab_L(\Cmat)$, it remains to prove that
\[
\dim_{\fq}\bigl(\Stab_L(\Cmat)\bigr)\ge \dim_{\fq}\bigl(\Ann_L(\Cmat)\bigr)+1.
\]
If $\Cmat\neq\{\ZZ\}$, then $\Im_m\in\Stab_L(\Cmat)$ but $\Im_m\notin\Ann_L(\Cmat)$, so
\[
\dim_{\fq}\bigl(\Stab_L(\Cmat)\bigr)\ge \dim_{\fq}\bigl(\Ann_L(\Cmat)\bigr)+1\ge m(m-s)+1.
\]
If $\Cmat=\{\ZZ\}$, then
\[
\Stab_L(\Cmat)=\fq^{m\times m},
\]
so
\[
\dim_{\fq}\bigl(\Stab_L(\Cmat)\bigr)=m^2\ge m(m-s)+1,
\]
because $0<s<m$. 
\end{enumerate}
\end{proof}

\begin{corollary}[Generalized coordinate restrictions]
\label{cor:coord_restriction}
Let $V_1,\dots,V_n\subseteq\fqm$ be $\fq$-subspaces, and set
\[
W:=V_1\times\cdots\times V_n,
\qquad
V:=V_1+\cdots+V_n.
\]
Assume that $V\neq\fqm$, and let
\[
r:=\dim_{\fq}(V).
\]
Then, for every $\fq$-linear code $\CC\subseteq\fqm^n$,
\[
\dim_{\fq}\Bigl(\Ann_L\bigl(\phiBmat(\CC\cap W)\bigr)\Bigr)\ge m(m-r).
\]
Consequently,
\[
\dim_{\fq}\Bigl(\Stab_L\bigl(\phiBmat(\CC\cap W)\bigr)\Bigr)\ge m(m-r) +1.
\]
\end{corollary}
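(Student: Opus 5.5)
The plan is to reduce the corollary to part~2 of Theorem~\ref{thm:iterative_stab_ann} through a simple inclusion of ambient spaces. Since each $V_i\subseteq V_1+\cdots+V_n=V$, we have $W=V_1\times\cdots\times V_n\subseteq V^n$. Hence
\[
\CC\cap W\subseteq \CC\cap V^n ,
\]
and applying the $\fq$-linear map $\phiBmat$ gives $\phiBmat(\CC\cap W)\subseteq\phiBmat(\CC\cap V^n)$.

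The key observation is that annihilators are order-reversing: if $\mathcal{D}_1\subseteq\mathcal{D}_2$ are matrix codes, then $\Ann_L(\mathcal{D}_2)\subseteq\Ann_L(\mathcal{D}_1)$. Indeed, a matrix killing every element of the larger code kills every element of the smaller one. The hypothesis $V\neq\fqm$ gives $r<m$. If $r>0$, part~2 of Theorem~\ref{thm:iterative_stab_ann} applies to $V$ with $s=r$ and yields
\[
\dim_{\fq}\Ann_L\bigl(\phiBmat(\CC\cap W)\bigr)\ge\dim_{\fq}\Ann_L\bigl(\phiBmat(\CC\cap V^n)\bigr)\ge m(m-r).
\]
If $r=0$, then $W=\{\zz\}$, so the annihilator is all of $\fq^{m\times m}$, of dimension $m^2=m(m-0)$. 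Alternatively, one can rerun the argument of the theorem directly with $\mathcal{T}_V$ and $V$ arbitrary: every $T$ vanishing on $V$ kills each coordinate $c_i\in V_i\subseteq V$. This avoids the case split entirely.

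For the stabilizer bound, I would repeat the final step of the theorem's proof with $\Cmat:=\phiBmat(\CC\cap W)$. If $\Cmat\neq\{\ZZ\}$, then $\Im_m$ lies in $\Stab_L(\Cmat)$ but not in $\Ann_L(\Cmat)$, and $\Ann_L(\Cmat)\subseteq\Stab_L(\Cmat)$ is a subspace, so the dimension increases by at least one. If $\Cmat=\{\ZZ\}$, the stabilizer is all of $\fq^{m\times m}$, which has dimension $m^2\ge m(m-r)+1$ because $r<m$ forces $m-r\ge 1$ and therefore $m(m-r)\le m^2-m+\dots$; more precisely, $m^2-m(m-r)=mr\ge 1$ when $r\ge1$.

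The only delicate point is the degenerate case $r=0$ with $\Cmat=\{\ZZ\}$. There the stated bound $m^2+1$ exceeds $m^2$, so this case must be excluded or handled separately. I would note that $r=0$ forces $W=\{\zz\}$, and either assume $r\ge1$ (as the theorem's hypothesis $0<s<m$ implicitly does) or remark that the stabilizer bound is meant for $r>0$. Apart from this, the argument is a routine monotonicity reduction.
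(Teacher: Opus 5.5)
Your proof is correct and follows the paper's route: the inclusion $W\subseteq V^n$, the order-reversing behaviour of $\Ann_L$ under inclusion of codes, and item~(2) of Theorem~\ref{thm:iterative_stab_ann}. It is more careful than the paper's proof in two ways. First, the paper obtains the stabilizer bound only from $\Ann_L\subseteq\Stab_L$, which gives $m(m-r)$ rather than $m(m-r)+1$, and your $\Im_m$ argument supplies the missing $+1$. Second, the paper invokes the theorem without checking its hypothesis $0<s<m$, whereas you handle $r=0$ separately; there your observation that all $V_i=\{0\}$ gives $\dim_{\fq}\Stab_L=m^2<m^2+1$ is a genuine counterexample, so the stated stabilizer bound needs the extra hypothesis $r\ge 1$.
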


\begin{proof}
Since each $V_i$ is contained in $V$, one has
\[
W\subseteq V^n.
\]
Hence
\[
\CC\cap W\subseteq \CC\cap V^n.
\]
By Item~(2) of Theorem~\ref{thm:iterative_stab_ann},
\[
\dim_{\fq}\Bigl(\Ann_L\bigl(\phiBmat(\CC\cap V^n)\bigr)\Bigr)\ge m(m-r),
\]
because $\dim_{\fq}(V)=r<m$.
Now every matrix that annihilates $\phiBmat(\CC\cap V^n)$ also annihilates
$\phiBmat(\CC\cap W)$, since
\[
\phiBmat(\CC\cap W)\subseteq \phiBmat(\CC\cap V^n).
\]
Therefore,
\[
\Ann_L\bigl(\phiBmat(\CC\cap V^n)\bigr)
\subseteq
\Ann_L\bigl(\phiBmat(\CC\cap W)\bigr),
\]
and thus
\[
\dim_{\fq}\Bigl(\Ann_L\bigl(\phiBmat(\CC\cap W)\bigr)\Bigr)\ge m(m-r).
\]
The bound on the stabilizer follows from the inclusion
\[
\Ann_L(\Cmat)\subseteq \Stab_L(\Cmat)
\]
for every matrix code $\Cmat$.
The final claim is immediate.
\end{proof}

Theorem~\ref{thm:iterative_stab_ann} and Corollary~\ref{cor:coord_restriction}
show that some natural classes of restricted subcodes retain quantitatively
large stabilizer or annihilator algebras after expansion to matrix form.
Therefore, these families remain algebraically visible from the viewpoint of
structural distinguishers.

\section{Applications in Cryptography}\label{sec:crypto}

The structural results obtained in the previous section have direct implications
for cryptographic constructions based on rank-metric codes. In particular, the
presence of non-trivial stabilizer or annihilator algebras provides potential
distinguishers that must be avoided when designing public codes. In this section,
we exploit these observations to construct encryption schemes based on
$\lambdav$-Gabidulin codes, using carefully chosen $\F_q$-linear subcodes.

\subsection{Selection of Public Subcodes}

A central design requirement for the public code is to avoid residual algebraic
structure that could be exploited by structural distinguishers. In the present
setting, the results of the previous section show that this issue is naturally
captured by the stabilizer and annihilator algebras of the matrix image of the
chosen subcode. Consequently, the main objective is to select $\F_q$-linear
subcodes of $\mathcal{G}_{\lambdav}(\gv,k)$ whose expansion behaves as much as
possible like a random matrix code.

Recall that any $\F_q$-linear subcode $\mathcal{D}$ of
$\mathcal{G}_{\lambdav}(\gv,k)$ can be written in the form
\[
\mathcal{D}=\mathcal{K}\cap \mathcal{G}_{\lambdav}(\gv,k),
\]
for some $\F_q$-linear code $\mathcal{K}\subseteq \F_{q^m}^n$.
From a cryptographic viewpoint, the choice of $\mathcal{K}$ is therefore
crucial: depending on this choice, the resulting subcode may either retain
detectable algebraic invariants after expansion to matrix form or resemble a
generic $\F_q$-linear rank-metric code.

Several natural choices of $\mathcal{K}$ lead to families that remain too
structured for cryptographic use. First, if $\mathcal{K}$ is
$\F_{q^m}$-linear, then $\mathcal{D}$ is itself $\F_{q^m}$-linear, and its
matrix image inherits the stabilizer phenomenon classically associated with
extension-field linearity. Second, if $\mathcal{K}=V^n$ for some
$\F_q$-subspace $V\subseteq \F_{q^m}$, then $\mathcal{D}$ is a subspace
subcode, and Theorem~\ref{thm:iterative_stab_ann} shows that its matrix image
retains explicit algebraic invariants after expansion. More precisely, its right
stabilizer remains large in the Gabidulin case, while its left annihilator and
left stabilizer admit quantitative lower bounds in the general $\F_q$-linear
setting. Third, if $\mathcal{K}=V_1\times\cdots\times V_n$ is a generalized
subspace restriction and
\[
r:=\dim_{\F_q}(V_1+\cdots+V_n)<m,
\]
then Corollary~\ref{cor:coord_restriction} implies that the corresponding matrix
code has a left annihilator algebra of dimension at least $m(m-r)$, and hence a
nontrivial left stabilizer algebra as well. In all these cases, the public code
remains algebraically distinguishable after expansion.

These observations suggest excluding such structured families from the public
code design. By contrast, random $\F_q$-linear subcodes are not forced a priori
to exhibit the stabilizer or annihilator patterns induced by extension-field
linearity or coordinate-wise subspace restrictions. They therefore constitute
the most promising candidates for constructing public codes that are both
cryptographically less distinguishable and flexible enough for practical
instantiation.

This motivates the generator-matrix approach developed in the next subsection,
whose purpose is precisely to produce $\F_q$-linear subcodes with a more
random-looking matrix image.

\subsection{Toward Random-Looking $\F_q$-Linear Subcodes}

The discussion of the previous subsection shows that some natural families of
subcodes of $\lambdav$-Gabidulin codes remain algebraically visible after
$q$-ary expansion. This motivates the search for a more flexible construction
of $\F_q$-linear subcodes, with fewer \emph{a priori} structural constraints.
To this end, we describe a simple generator-matrix-based construction derived
from the $q$-ary image of an $\F_{q^m}$-linear code.

Let $\mathcal{C}\subseteq \F_{q^m}^n$ be an $\F_{q^m}$-linear code of dimension
$k$. Since $\phiBvec(\mathcal{C})$ is an $\F_q$-linear code of dimension $km$,
any $\F_q$-linear subcode $\mathcal{D}\subseteq \mathcal{C}$ gives rise, via
$\phiBvec$, to an $\F_q$-linear subspace of $\phiBvec(\mathcal{C})$. The next
proposition shows that all such subcodes can be obtained by left-multiplying a
generator matrix of $\phiBvec(\mathcal{C})$ by a full-rank matrix over $\F_q$.

\begin{proposition}\label{subcode}
Let $B$ be an $\F_q$-basis of $\F_{q^m}$, 
$\mathcal{C}\subseteq \F_{q^m}^n$ be an $\F_{q^m}$-linear code of dimension
$k$, and $\Gm^{\mathrm{vec}}\in \F_q^{km\times mn}$ be a generator matrix
of $\phiBvec(\mathcal{C})$. Let $\mathcal{D}\subseteq \F_{q^m}^n$ be an
$\F_q$-linear code of dimension $k' \leq km $, and let
$\widetilde{\Gm}\in \F_q^{k'\times mn}$ be a generator matrix of
$\phiBvec(\mathcal{D})$. Then $\mathcal{D}$ is a subcode of $\mathcal{C}$ if
and only if there exists a full-rank matrix $\Pm\in\F_q^{k'\times km}$
such that
$\widetilde{\Gm}=\Pm\,\Gm^{\mathrm{vec}}$.
\end{proposition}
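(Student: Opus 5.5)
The plan is to work entirely on the $q$-ary side, where both codes become $\F_q$-subspaces of $\F_q^{mn}$. The bridge is that $\phiBvec$ is an $\F_q$-linear isomorphism $\F_{q^m}^n\to\F_q^{mn}$. Hence $\mathcal{D}\subseteq\mathcal{C}$ holds if and only if $\phiBvec(\mathcal{D})\subseteq\phiBvec(\mathcal{C})$. Injectivity gives the reverse implication, since $\phiBvec(\mathcal{D})\subseteq\phiBvec(\mathcal{C})$ forces $\mathcal{D}=(\phiBvec)^{-1}(\phiBvec(\mathcal{D}))\subseteq\mathcal{C}$. So the statement reduces to a fact about row spaces of matrices over $\F_q$. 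Here $\dim_{\F_q}\phiBvec(\mathcal{C})=km$, so $\Gm^{\mathrm{vec}}$ has $km$ linearly independent rows, and $\widetilde{\Gm}$ has $k'$ linearly independent rows.

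For the forward direction, assume $\mathcal{D}\subseteq\mathcal{C}$. Each row $\widetilde{\gv}_i$ of $\widetilde{\Gm}$ lies in $\phiBvec(\mathcal{D})\subseteq\phiBvec(\mathcal{C})$. Since the rows of $\Gm^{\mathrm{vec}}$ form a basis of that space, $\widetilde{\gv}_i$ is a unique $\F_q$-combination of them. Collect the coefficient vectors as the rows of $\Pm\in\F_q^{k'\times km}$, so that $\widetilde{\Gm}=\Pm\Gm^{\mathrm{vec}}$. For full rank, note that $\Gm^{\mathrm{vec}}$ has full row rank, so right multiplication by it is injective on row vectors of length $km$. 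Therefore $\rk(\Pm)=\rk(\Pm\Gm^{\mathrm{vec}})=\rk(\widetilde{\Gm})=k'$, which is full row rank because $k'\le km$.

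For the converse, assume $\widetilde{\Gm}=\Pm\Gm^{\mathrm{vec}}$ for some full-rank $\Pm$. Then every row of $\widetilde{\Gm}$ is an $\F_q$-combination of rows of $\Gm^{\mathrm{vec}}$. Hence $\phiBvec(\mathcal{D})$, the row space of $\widetilde{\Gm}$, is contained in $\phiBvec(\mathcal{C})$. Pulling back by $(\phiBvec)^{-1}$ gives $\mathcal{D}\subseteq\mathcal{C}$. The full-rank hypothesis is not needed in this direction. It only ensures consistency, because $\Pm\Gm^{\mathrm{vec}}$ then has rank $k'$ and genuinely generates a $k'$-dimensional code.

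There is no real obstacle here. The only point needing care is the rank equality $\rk(\Pm\Gm^{\mathrm{vec}})=\rk(\Pm)$, which I will justify by the injectivity of $\xv\mapsto\xv\Gm^{\mathrm{vec}}$ from $\F_q^{km}$ to $\F_q^{mn}$. I will also note explicitly that "full-rank" means rank $k'$, that is, full row rank, which is consistent with $k'\le km$.
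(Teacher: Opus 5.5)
Your proposal is correct and follows the paper's proof essentially step by step: you reduce via the $\F_q$-isomorphism $\phiBvec$ to a row-space containment, read off $\Pm$ from the coefficients, and get full rank from $\rk(\widetilde{\Gm})=k'$. Your justification of $\rk(\Pm)=\rk(\Pm\Gm^{\mathrm{vec}})$ through the injectivity of $\xv\mapsto\xv\Gm^{\mathrm{vec}}$ is a bit more explicit than the paper's one-line remark, and you correctly note that the converse does not use the full-rank hypothesis, but the route is the same.
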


\begin{proof}
Since $\phiBvec$ is an $\F_q$-linear isomorphism, one has
\[
\mathcal{D}\subseteq \mathcal{C}
\quad\Longleftrightarrow\quad
\phiBvec(\mathcal{D})\subseteq \phiBvec(\mathcal{C}).
\]
Now $\widetilde{\Gm}$ and $\Gm^{\mathrm{vec}}$ are generator matrices of
$\phiBvec(\mathcal{D})$ and $\phiBvec(\mathcal{C})$, respectively. Therefore,
the inclusion
$\phiBvec(\mathcal{D})\subseteq \phiBvec(\mathcal{C})$
holds if and only if each row of $\widetilde{\Gm}$ is an $\F_q$-linear
combination of the rows of $\Gm^{\mathrm{vec}}$. Equivalently, there exists a
matrix $\Pm\in\F_q^{k'\times km}$ such that
\[
\widetilde{\Gm}=\Pm\,\Gm^{\mathrm{vec}}.
\]
Since $\widetilde{\Gm}$ has rank $k'$, the matrix $\Pm$ must also have rank
$k'$.
\end{proof}

Proposition~\ref{subcode} provides a convenient parametrization of
$\F_q$-linear subcodes of $\mathcal{C}$ through the $q$-ary image
$\phiBvec(\mathcal{C})$. In particular, it allows one to prescribe any subcode
dimension
$1\le k'\le km$. This flexibility is useful in
comparison with parity-check-based constructions, for which the achievable
dimensions are typically more constrained and may depend on additional rank
conditions~\cite{bergerMC17,24two}.

From a cryptographic viewpoint, this construction is attractive because it makes
it possible to avoid, at least at the design level, the structured families
discussed in the previous subsection. Indeed, if $m$ does not divide $k'$, then the
resulting subcode cannot be $\F_{q^m}$-linear, since every
$\F_{q^m}$-linear code has $\F_q$-dimension divisible by $m$. Moreover, unlike
subspace subcodes or generalized subspace restrictions, the present approach
does not impose any explicit coordinate-wise restriction pattern. For this
reason, choosing $\Pm$ at random yields natural candidates for public codes
whose $q$-ary images may behave more like generic matrix codes.
More precisely, for a fixed value of $k'$, one may sample a full-rank matrix
$\Pm\in \F_q^{k'\times km}$
uniformly at random and define
$\widetilde{\Gm}=\Pm\,\Gm^{\mathrm{vec}}$.
This produces an $\F_q$-linear subcode of $\mathcal{C}$ whose matrix image can
then be analyzed using Proposition~\ref{stab_ann_linear_system}.

Our experiments further support this design choice. For several parameter sets, we fixed a parent $\lambda$-Gabidulin code and generated 500 subcodes using the proposed construction. For each generated subcode, we computed the dimensions of the left and right stabilizer algebras of its matrix image. In almost all tested instances, both stabilizers were trivial, that is, of dimension~$1$. More precisely, this occurred in all tested cases over $q\in \{8,16\}$ and in all but two tested cases over $q=2$, where the empirical probability that the left stabilizer is trivial remained $0.998$. Although these experiments were restricted to moderate parameter sizes, they nevertheless provide strong evidence that the proposed construction yields subcodes whose matrix images behave much more like random matrix codes than the structured subcode families discussed above. The SageMath codes used for our simulations are publicly available at \url{https://github.com/Freddy-Lende/Lambda-Gabidulin-and-LGS-Niederreiter-cryptosystem}.

\subsection{Instantiating the MinRank Encryption Frameworks}

We now instantiate the McEliece and Niederreiter encryption frameworks in the
MinRank setting introduced in~\cite[Sections~4 and~5]{aragonMR24}, using public
codes derived from random-looking $\F_q$-linear subcodes of
$\lambdav$-Gabidulin codes.

In both constructions, the secret key is based on a
$\lambdav$-Gabidulin code $\mathcal{G}_{\lambdav}(\gv,k)$ together with a
chosen $\F_q$-basis $B$ of $\F_{q^m}$. The public key is obtained from an
$\F_q$-linear subcode
\[
\mathcal{D}\subseteq \mathcal{G}_{\lambdav}(\gv,k)
\]
of dimension $k'$, constructed through Proposition~\ref{subcode} by selecting a
full-rank matrix $\Pm\in\F_q^{k'\times km}$ and forming
\[
\widetilde{\Gm}=\Pm\,\Gm_{\lambdav}^{\mathrm{vec}},
\]
where $\Gm_{\lambdav}^{\mathrm{vec}}$ is a generator matrix of the expanded code
$\phiBvec(\mathcal{G}_{\lambdav}(\gv,k))$.

A distinctive feature of our instantiations is that, once the public subcode
$\mathcal{D}$ has been constructed, no additional masking transformation is
applied to derive the public key. In contrast with subcode-based constructions
that rely on extra disguising mechanisms or enhanced matrix-code
transformations~\cite{bergerMC17,aragonMR24}, our public key is simply a direct
representation of the chosen subcode. More precisely, in the McEliece-like
variant, the public key is given by a basis of the matrix code
$\phiBmat(\mathcal{D})$, whereas in the Niederreiter-like variant, it is given
by a parity-check matrix of the vector code $\phiBvec(\mathcal{D})$. In this
respect, the proposed approach is closer in spirit to the original McEliece
paradigm, where the public matrix directly describes a Goppa code \cite{M78}.

We now describe these two instantiations in detail.

\subsubsection{\bf The \texorpdfstring{$\mathsf{LGS}$}{LGS}-McEliece Cryptosystem}

In what follows, we specialize the general construction described above to the
McEliece-like MinRank framework of~\cite[Fig.~2]{aragonMR24}. We refer to the
resulting scheme as the \emph{Lambda-Gabidulin Subcode McEliece-like
cryptosystem}, abbreviated as $\mathsf{LGS}$-McEliece. In this variant, the
public code is represented directly by a basis of the matrix code
$\phiBmat(\mathcal{D})$, where
$\mathcal{D}\subseteq \mathcal{G}_{\lambdav}(\gv,k)$
is an $\F_q$-linear subcode constructed as in
Proposition~\ref{subcode}.

\paragraph{\bf Key generation}
The key generation starts from a secret $\lambdav$-Gabidulin code
$\mathcal{G}_{\lambdav}(\gv,k)$ and an $\F_q$-basis $B$ of $\F_{q^m}$. A
full-rank matrix $\Pm\in\F_q^{k'\times km}$ is then sampled in order to define
an $\F_q$-linear subcode $\mathcal{D}$ of prescribed dimension $k'$. The public
key is obtained by computing a basis of the matrix image
$\phiBmat(\mathcal{D})$, while the secret key keeps the compact description of
the underlying $\lambdav$-Gabidulin code together with the basis $B$. The
public rank-weight bound is chosen so as to remain within the decoding
capability of the $\lambdav$-Gabidulin code. The corresponding procedure is
summarized in Algorithm~\ref{alg:gene_McE}.

\begin{algorithm}[h!]
\caption{\textnormal{Key Generation for $\mathsf{LGS}$-McEliece}}
\label{alg:gene_McE}
\DontPrintSemicolon
\KwIn{$n=m$, integers $k\ge 1$ and $q$ a prime power.}
\KwOut{A public key $PK$ and a secret key $SK$.}

Choose an integer $k'$ not divisible by \(m\) and such that $1\le k'<km$.\;

Construct a generator matrix $\Gm_{\lambdav}\in \F_{q^m}^{k\times n}$ of the
$\lambdav$-Gabidulin code $\mathcal{G}_{\lambdav}(\gv,k)$.\;

Set
\[
\delta=\rk(\lambdav^{-1}),
\qquad
t_{\mathrm{pub}}=\left\lfloor \frac{n-k}{2\delta}\right\rfloor,
\]
where $\lambdav^{-1}=(\lambda_1^{-1},\dots,\lambda_n^{-1})$.\;

Choose a random $\F_q$-basis $B$ of $\F_{q^m}$.\;

Compute a generator matrix $\Gm_{\lambdav}^{\mathrm{vec}}$ of
$\phiBvec(\mathcal{G}_{\lambdav}(\gv,k))$.\;

Sample a uniformly random full-rank matrix
$\Pm\in \F_q^{k'\times km}$ and set
\[
\widetilde{\Gm}=\Pm\,\Gm_{\lambdav}^{\mathrm{vec}}.
\]
This matrix generates the $\F_q$-linear subcode
$\phiBvec(\mathcal{D})\subseteq \phiBvec(\mathcal{G}_{\lambdav}(\gv,k))$.\;

Compute a basis $(\Gm_1,\dots,\Gm_{k'})$ of
$\phiBmat(\mathcal{D})$.\;

Set
\[
PK=(\Gm_1,\dots,\Gm_{k'},t_{\mathrm{pub}}),
\qquad
SK=(B,\gv,\lambdav,k).
\]
\end{algorithm}

\paragraph{\bf Encryption}
Encryption follows the standard McEliece paradigm in matrix-code form. The
plaintext is viewed as a coefficient vector
\[
\xv=(x_1,\dots,x_{k'})\in\F_q^{k'},
\]
which selects a codeword in the public matrix code generated by
$(\Gm_1,\dots,\Gm_{k'})$. A low-rank error matrix is then generated at random
and added to the resulting codeword in order to hide it. Since the public code
is given directly by a basis of $\phiBmat(\mathcal{D})$, encryption is
particularly simple and amounts to a linear combination in the public basis
followed by the addition of a rank-bounded perturbation. This is summarized in
Algorithm~\ref{alg:enc_McE}.

\begin{algorithm}[h!]
\caption{\textnormal{Encryption for $\mathsf{LGS}$-McEliece}}
\label{alg:enc_McE}
\DontPrintSemicolon
\KwIn{A plaintext $\xv=(x_1,\dots,x_{k'})\in \F_q^{k'}$ and
$PK=(\Gm_1,\dots,\Gm_{k'},t_{\mathrm{pub}})$.}
\KwOut{A ciphertext $\Ym\in \F_q^{m\times n}$.}

Generate an error matrix $\Em\in \F_q^{m\times n}$ such that
$\rk(\Em)\le t_{\mathrm{pub}}$.\;

Compute
\[
\Ym=\sum_{i=1}^{k'} x_i\Gm_i+\Em.
\]

\Return{$\Ym$.}
\end{algorithm}

\paragraph{\bf Decryption}
Decryption uses the secret information of the ambient
$\lambdav$-Gabidulin code rather than the public subcode itself. More
precisely, the received matrix is first mapped back to a vector of
$\F_{q^m}^n$ through the inverse expansion map with respect to the secret basis
$B$. The resulting word is then decoded in the secret code
$\mathcal{G}_{\lambdav}(\gv,k)$, whose decoding algorithm corrects up to
$t_{\mathrm{pub}}$ rank-weight errors, thereby recovering the transmitted
codeword of the subcode $\mathcal{D}$. Finally, the plaintext is obtained by
expressing the recovered matrix codeword in the public basis
$(\Gm_1,\dots,\Gm_{k'})$. The procedure is given in
Algorithm~\ref{alg:dec_McE}.

\begin{algorithm}[h!]
\caption{\textnormal{Decryption for $\mathsf{LGS}$-McEliece}}
\label{alg:dec_McE}
\DontPrintSemicolon
\KwIn{A ciphertext $\Ym\in \F_q^{m\times n}$, the public basis
$(\Gm_1,\dots,\Gm_{k'})$, and $SK=(B,\gv,\lambdav,k)$.}
\KwOut{A plaintext $\xv=(x_1,\dots,x_{k'})\in \F_q^{k'}$.}

Compute
\[
\yv=(\phiBmat)^{-1}(\Ym)\in \F_{q^m}^n.
\]

Decode $\yv$ in the code $\mathcal{G}_{\lambdav}(\gv,k)$ and recover the
nearest codeword $\cv\in \mathcal{G}_{\lambdav}(\gv,k)$.\;

Compute $\Cm=\phiBmat(\cv)$.\;

Recover $\xv=(x_1,\dots,x_{k'})$ by solving the linear system
\[
\Cm=\sum_{i=1}^{k'} x_i\Gm_i
\]
over $\F_q$.\;

\Return{$\xv$.}
\end{algorithm}

\subsubsection{\bf The \texorpdfstring{$\mathsf{LGS}$}{LGS}-Niederreiter Cryptosystem}

We now specialize the general construction described above to the
Niederreiter-like MinRank framework of~\cite[Fig.~3]{aragonMR24}. We refer to
the resulting scheme as the \emph{Lambda-Gabidulin Subcode
Niederreiter-like cryptosystem}, abbreviated as $\mathsf{LGS}$-Niederreiter.
In this variant, the public code is represented directly by a parity-check
matrix of the vector code $\phiBvec(\mathcal{D})$, where
$\mathcal{D}\subseteq \mathcal{G}_{\lambdav}(\gv,k)$ is an $\F_q$-linear
subcode constructed as in Proposition~\ref{subcode}.

\paragraph{\bf Key generation}
Key generation starts from a secret $\lambdav$-Gabidulin code
$\mathcal{G}_{\lambdav}(\gv,k)$ and an $\F_q$-basis $B$ of $\F_{q^m}$. A
full-rank matrix $\Pm\in\F_q^{k'\times km}$ is then sampled in order to define
an $\F_q$-linear subcode $\mathcal{D}$ of prescribed dimension $k'$. The public
key is obtained by computing a parity-check matrix of the vector image
$\phiBvec(\mathcal{D})$, while the secret key keeps the compact description of
the underlying $\lambdav$-Gabidulin code together with the basis $B$. The
public rank-weight bound is chosen so as to remain within the decoding
capability of the $\lambdav$-Gabidulin code. The corresponding procedure is
summarized in Algorithm~\ref{alg:gene_Nied}.

\begin{algorithm}[h!]
\caption{\textnormal{Key Generation for $\mathsf{LGS}$-Niederreiter}}
\label{alg:gene_Nied}
\DontPrintSemicolon
\KwIn{$n=m$, integers $k\ge 1$ and $q$ a prime power.}
\KwOut{A public key $PK$ and a secret key $SK$.}

Choose an integer $k'$ not divisible by $m$ and such that $1\le k'<km$ .\;

Construct a generator matrix $\Gm_{\lambdav}\in \F_{q^m}^{k\times n}$ of the
$\lambdav$-Gabidulin code $\mathcal{G}_{\lambdav}(\gv,k)$.\;

Set
\[
\delta=\rk(\lambdav^{-1}),
\qquad
t_{\mathrm{pub}}=\left\lfloor \frac{n-k}{2\delta}\right\rfloor,
\]
where $\lambdav^{-1}=(\lambda_1^{-1},\dots,\lambda_n^{-1})$.\;

Choose a random $\F_q$-basis $B$ of $\F_{q^m}$.\;

Compute a generator matrix $\Gm_{\lambdav}^{\mathrm{vec}}$ of
$\phiBvec(\mathcal{G}_{\lambdav}(\gv,k))$.\;

Sample a uniformly random full-rank matrix
$\Pm\in \F_q^{k'\times km}$ and set
\[
\widetilde{\Gm}=\Pm\,\Gm_{\lambdav}^{\mathrm{vec}}.
\]
This matrix generates the $\F_q$-linear subcode
$\phiBvec(\mathcal{D})\subseteq \phiBvec(\mathcal{G}_{\lambdav}(\gv,k))$.\;

Compute a parity-check matrix
\[
\widetilde{\Hm}\in \F_q^{(mn-k')\times mn}
\]
of $\phiBvec(\mathcal{D})$.\;

Set
\[
PK=(\widetilde{\Hm},t_{\mathrm{pub}}),
\qquad
SK=(B,\gv,\lambdav,k).
\]
\end{algorithm}

\paragraph{\bf Encryption}
Encryption follows the standard Niederreiter paradigm in vector form. The
information to be transmitted is represented by an error vector
$\ev\in \F_q^{mn}$
whose folded matrix representation has rank at most $t_{\mathrm{pub}}$. The
ciphertext is then obtained as the syndrome of $\ev$ with respect to the public
parity-check matrix $\widetilde{\Hm}$. Since the public code is given directly
through a parity-check description of $\phiBvec(\mathcal{D})$, encryption
reduces to a single syndrome computation. This is summarized in
Algorithm~\ref{alg:enc_Nied}.

\begin{algorithm}[h!]
\caption{\textnormal{Encryption for $\mathsf{LGS}$-Niederreiter}}
\label{alg:enc_Nied}
\DontPrintSemicolon
\KwIn{$PK=(\widetilde{\Hm},t_{\mathrm{pub}})$ and an error vector
$\ev\in \F_q^{mn}$ such that $\rk(\Fold(\ev))\le t_{\mathrm{pub}}$.}
\KwOut{A ciphertext $\sv\in \F_q^{mn-k'}$.}

Compute
$\sv=\ev\,\widetilde{\Hm}^{\top}$.

\Return{$\sv$.}
\end{algorithm}

\paragraph{\bf Decryption}
Decryption combines the public syndrome description with the secret decoding
algorithm of the ambient $\lambdav$-Gabidulin code. Given a ciphertext
\(\sv\), one first computes any vector \(\yv\in\F_q^{mn}\) having syndrome
\(\sv\) with respect to the public parity-check matrix \(\widetilde{\Hm}\).
This vector is then mapped back to \(\F_{q^m}^n\) using the inverse expansion
map associated with the secret basis \(B\). Decoding in the secret code
\(\mathcal{G}_{\lambdav}(\gv,k)\) yields the nearest codeword \(\cv\), and the
transmitted error is finally recovered by subtracting \(\cv\) and expanding
back over \(\F_q\). The corresponding procedure is summarized in
Algorithm~\ref{alg:dec_Nied}.

\begin{algorithm}[h!]
\caption{\textnormal{Decryption for $\mathsf{LGS}$-Niederreiter}}
\label{alg:dec_Nied}
\DontPrintSemicolon
\KwIn{A ciphertext $\sv\in \F_q^{mn-k'}$, the public parity-check matrix
$\widetilde{\Hm}$, and $SK=(B,\gv,\lambdav,k)$.}
\KwOut{An error vector $\ev\in \F_q^{mn}$.}

Find any $\yv\in \F_q^{mn}$ such that
$\yv\,\widetilde{\Hm}^{\top}=\sv$.

Compute
\[
\yv'=(\phiBvec)^{-1}(\yv)\in \F_{q^m}^n.
\]

Decode $\yv'$ in the code $\mathcal{G}_{\lambdav}(\gv,k)$ and recover the
nearest codeword $\cv\in \mathcal{G}_{\lambdav}(\gv,k)$.\;

Set
$\ev'=\yv'-\cv$.

Return
$\ev=\phiBvec(\ev')$.

\end{algorithm}

\section{Security Analysis} \label{sec_secur}

In this section, we formalize the computational problems underlying the
security of the proposed schemes and discuss their cryptographic relevance. We
distinguish two main attack directions. Structural attacks aim at recovering the
secret key, or an equivalent one, by exploiting residual algebraic properties of
the public code. By contrast, decoding attacks treat the public code as
essentially random and attempt to recover plaintexts through generic algebraic
techniques, which in our setting reduce to solving instances of the MinRank
problem.

\subsection{Structural Attacks}

Structural attacks aim at exploiting residual algebraic properties of the
public code in order to distinguish it from a random matrix code or to recover
the secret key, or an equivalent one. Following the viewpoint of
\cite[Definitions~17 and~18]{aragonMR24}, we distinguish between a
distinguishing task and a search task adapted to the present setting.

\begin{definition}[$\mathsf{LGS}$--Distinguishing Problem]\label{def:Distinguishing_problem}
Let $\Cmat \subseteq \F_q^{m \times n}$ be a matrix code of dimension $k'$.
The $\mathsf{LGS}$--Distinguishing Problem consists in deciding, with
non-negligible advantage, whether $\Cmat$ is a subcode of the matrix image of a
$\lambdav$-Gabidulin code or a uniformly random matrix code of the same
dimension.
\end{definition}

\begin{definition}[$\mathsf{LGS}$--Search Problem]\label{def:search_problem}
Let $m,n,k'$ be positive integers with $n\le m$, and 
$\Cmat \subseteq \F_q^{m \times n}$ be a matrix code of dimension $k'$.
The $\mathsf{LGS}$--Search Problem consists in recovering, if possible, an
$\F_q$-basis $B$ of $\F_{q^m}$ and vectors
$\gv,\lambdav \in \F_{q^m}^n$ such that
\[
\Cmat \subseteq \phiBmat\!\left(\mathcal{G}_{\lambdav}(\gv,k)\right).
\]
More generally, one may also regard the $\mathsf{LGS}$--Search Problem as that of recovering any
equivalent secret key tupple $ \left( B^*, \gv^*, \lambdav^*, k \right)$ yielding the same public code. That is to say $\Cmat \subseteq \phi_{B^*}^{\mathrm{mat}}\!\left(\mathcal{G}_{\lambdav^*}(\gv^*,k)\right)$.
\end{definition}
A successful structural attack would then reveal the hidden
$\lambdav$-Gabidulin structure underlying the public subcode, together with the
expansion map induced by the secret basis $B$, or at least enough information to
derive an equivalent secret representation. This motivates the following attack
scenarios.

\paragraph{\bf Stabilizer-Based Attacks and Design Choices}

Stabilizer-based techniques provide efficient distinguishers between structured
matrix codes of Gabidulin type and generic random matrix codes. Their
effectiveness stems from the presence of non-trivial stabilizer algebras, which
typically do not occur for random matrix codes but naturally arise in several
highly structured families~\cite{aragonMR24,porwal25}.

Although subspace subcodes are only $\F_q$-linear,
Theorem~\ref{thm:iterative_stab_ann} shows that they may nevertheless admit a
non-trivial stabilizer algebra. Likewise,
Corollary~\ref{cor:coord_restriction} establishes that certain generalized
subspace restrictions also possess detectable stabilizers.

These observations indicate that a non-trivial stabilizer algebra should be
viewed as a structural weakness. For this reason, the public subcodes used in
our cryptosystem are chosen so as to avoid such configurations. More precisely,
Proposition~\ref{stab_ann_linear_system} provides an effective test to detect
whether a subcode generated at random through Proposition~\ref{subcode} admits a
non-trivial stabilizer algebra. Whenever such a stabilizer is detected, the
subcode is discarded and regenerated. This simple filtering step prevents
stabilizer-based distinguishing attacks against the proposed construction.

\paragraph{\bf Overbeck-Like Distinguisher}
When \(\mathcal{C}_{\mathrm{pub}}\) is an \(\F_q\)-linear subcode of
$\phiBmat\!\left(\mathcal{G}(\gv,k)\right)$,
that is, when the ambient secret code is a classical Gabidulin code, the
Overbeck-like argument of~\cite[Sec.~6.1.4]{aragonMR24} applies. More
precisely, using~\cite[Lemma~2]{aragonMR24}, one obtains, as in
\cite[Proposition~7 and Corollary~1]{aragonMR24}, a matrix code
$\mathcal{U}\subseteq \F_q^{m\times m}$
such that
\[
\dim\bigl(\mathcal{C}_{\mathrm{pub}}+\Um\mathcal{C}_{\mathrm{pub}}\bigr)
\le m(n-1),
 \text{ for all } \Um\in\mathcal{U}.
\]
By contrast, if \(\mathcal{C}_{\mathrm{pub}}\) is a random matrix code of
dimension \(k'\), one expects
\[
\dim\bigl(\mathcal{C}_{\mathrm{pub}}+\Um\mathcal{C}_{\mathrm{pub}}\bigr)
=
\min(mn,2k')
\]
with high probability, and in particular \(mn\) when \(mn\le 2k'\). This yields
a structural distinguisher in the Gabidulin case.

However, as explained in~\cite[pp.~22--23]{aragonMR24}, turning this property
into an explicit algebraic attack leads to a bilinear system with roughly
\(\Theta(m^2)\) unknowns and \(\Theta(m^2)\) equations, which remains out of
reach for the parameter ranges considered in this work. Therefore, although
this Overbeck-like property provides a formal structural distinguisher when
\(\mathcal{C}_{\mathrm{pub}}\subseteq \phiBmat(\mathcal{G}(\gv,k))\), it does
not currently lead to a practical attack. The general \(\lambdav\)-Gabidulin
case is not covered by this argument.

\paragraph{\bf Generator-Completion Attack}
We now formalize a structural attack in which the adversary attempts to
complete a generator matrix of the public subcode into a generator matrix of the
hidden expanded code. We first record the following lemma. 
\begin{lemma}\label{lem:qary-information-set-general}
Let $\mathcal{C}\subseteq \F_{q^m}^n$ be an $\F_{q^m}$-linear code of dimension
$k$, and let $I\subseteq \{1,\dots,n\}$ be an information set of
$\mathcal{C}$. Then the $km$ $q$-ary coordinate positions corresponding to the expansion of the
coordinates indexed by $I$ form an information set of the expanded code
$\phiBvec(\mathcal{C})\subseteq \F_q^{mn}$.
\end{lemma}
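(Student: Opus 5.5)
The plan is to use the characterization of an information set as a set of coordinates on which the projection is injective, equivalently bijective when the set has size equal to the dimension. Recall that $I$, with $|I|=k$, is an information set of $\mathcal{C}$ exactly when the projection
\[
\pi_I:\mathcal{C}\to\F_{q^m}^{I},\qquad \cv\mapsto (c_i)_{i\in I},
\]
is a bijection. This is the same as saying that the $k\times k$ submatrix of any generator matrix indexed by $I$ is invertible over $\F_{q^m}$. Let $J\subseteq\{1,\dots,mn\}$ denote the $km$ $q$-ary positions obtained by expanding the coordinates in $I$, namely the positions $(i-1)m+1,\dots,im$ for $i\in I$. The goal is to show that the projection $\pi_J:\phiBvec(\mathcal{C})\to\F_q^{J}$ is a bijection.

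The key step is a commutative square. By construction of $\phiBvec$, which expands coordinatewise, we have
\[
\pi_J\circ\phiBvec=\phi_B^{I}\circ\pi_I,
\]
where $\phi_B^{I}:\F_{q^m}^{I}\to\F_q^{J}$ applies $\phi_B$ to each coordinate. Here $\phiBvec$ restricted to $\mathcal{C}$ is an $\F_q$-linear bijection onto $\phiBvec(\mathcal{C})$, $\pi_I$ is a bijection by hypothesis, and $\phi_B^{I}$ is an $\F_q$-linear isomorphism because $B$ is a basis. It follows that $\pi_J$ restricted to $\phiBvec(\mathcal{C})$ equals $\phi_B^{I}\circ\pi_I\circ(\phiBvec)^{-1}$, which is a bijection onto $\F_q^{J}$.

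Finally, since $\dim_{\F_q}\phiBvec(\mathcal{C})=km=|J|$, this bijectivity means exactly that $J$ is an information set of the expanded code. Equivalently, the $km\times km$ submatrix of $\Gm^{\mathrm{vec}}$ on the columns $J$ is invertible.

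The only delicate point is bookkeeping: the ordering convention of $\phiBvec$ must be fixed so that the positions expanding coordinate $i$ are precisely the block $(i-1)m+1,\dots,im$. It must also be checked that the definition of information set is applied over $\F_q$ to an $\F_q$-linear code, which the dimension count $km$ handles. No real obstacle is expected beyond this.
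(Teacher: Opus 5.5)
Your proposal is correct and follows essentially the same route as the paper: both compose the bijective projection $\pi_I$ with the coordinatewise expansion isomorphism $\phi_B^{I}$ to obtain an $\F_q$-linear isomorphism from $\mathcal{C}$ onto $\F_q^{km}$. Your explicit identity $\pi_J\circ\phiBvec=\phi_B^{I}\circ\pi_I$, combined with the dimension count $km=|J|$, spells out the step the paper compresses into ``this means exactly.''
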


\begin{proof}
Let
\[
\pi_I:\mathcal{C}\longrightarrow \F_{q^m}^{k}
\]
denote the projection onto the coordinates indexed by $I$. Since $I$ is an
information set of $\mathcal{C}$, the map $\pi_I$ is an $\F_{q^m}$-linear
isomorphism.

Let
\[
\phi_{B,I}^{\mathrm{vec}}:\F_{q^m}^{k}\longrightarrow \F_q^{km}
\]
be the coordinatewise $q$-ary expansion map with respect to the basis $B$.
This is an $\F_q$-linear isomorphism.

Therefore, the composition
\[
\phi_{B,I}^{\mathrm{vec}}\circ \pi_I :
\mathcal{C}\longrightarrow \F_q^{km}
\]
is an $\F_q$-linear isomorphism. This means exactly that the $km$ positions  corresponding to the coordinates of the
$q$-ary expansion of the coordinates indexed by $I$ form an information set of
$\phiBvec(\mathcal{C})$.
\end{proof}



\begin{theorem}\label{thm:completion-normal-form}
Let $\mathcal{C}_{\mathrm{sec}}
=
\phiBvec\!\left(\mathcal{G}_{\lambdav}(\gv,k)\right)
\subseteq \F_q^{mn}$ and $\mathcal{F}\subseteq \mathcal{C}_{\mathrm{sec}}$  be an $\F_q$-linear
subcode of $\mathcal{G}_{\lambdav}(\gv,k)$, with dimension $k'$.

\begin{enumerate}
\item There exist a permutation matrix
\[
\Qm\in\F_q^{km\times km},
\]
and matrices
\[
\Am_1\in\F_q^{k'\times(km-k')},
\qquad
\Am_2\in\F_q^{k'\times(mn-km)}
\]
such that $\mathcal{F}$ admits a generator matrix of the form
\[
\widetilde{\Gm}
=
\begin{pmatrix}
(\Im_{k'}\; \Am_1)\,\Qm & \Am_2
\end{pmatrix}.
\]

\item For any fixed such normal form of $\widetilde{\Gm}$, there exists a unique
matrix
\[
\Am_3\in\F_q^{(km-k')\times(mn-km)}
\]
such that
\[
\widehat{\Gm}
=
\begin{pmatrix}
(\Im_{k'}\; \Am_1)\,\Qm & \Am_2\\
(\zz\; \Im_{km-k'})\,\Qm & \Am_3
\end{pmatrix}
\]
is a generator matrix of $\mathcal{C}_{\mathrm{sec}}$.
\end{enumerate}
\end{theorem}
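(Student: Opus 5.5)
The plan is to reduce everything to the systematic form of $\mathcal{C}_{\mathrm{sec}}$ on the first $km$ $q$-ary positions, and then argue linear-algebraically inside $\F_q^{km}$.

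\textbf{Step 1: an information set on the first $km$ positions.} Since $\mathcal{G}(\gv,k)$ is MRD, every set of $k$ coordinates is an information set. The same holds for $\mathcal{G}_{\lambdav}(\gv,k)=\mathcal{G}(\gv,k)\Delta$, because $\Delta$ is an invertible diagonal matrix. In particular $I=\{1,\dots,k\}$ is an information set. By Lemma~\ref{lem:qary-information-set-general}, the first $km$ $q$-ary positions then form an information set of $\mathcal{C}_{\mathrm{sec}}$.

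Hence the projection $\pi:\mathcal{C}_{\mathrm{sec}}\to\F_q^{km}$ onto these positions is an isomorphism. Consequently $\mathcal{C}_{\mathrm{sec}}$ has a systematic generator matrix $(\Im_{km}\;\Rm)$ with $\Rm\in\F_q^{km\times(mn-km)}$, and $\Rm$ is unique. Equivalently, every $\uv\in\F_q^{km}$ lifts to exactly one codeword $(\uv,\uv\Rm)$.

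The main point needing care is this step: the shape of the claimed normal form, with the permutation $\Qm$ acting only on the first $km$ columns, depends on those positions forming an information set. I would justify it exactly as above via the MRD property.

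\textbf{Step 2: the normal form of $\widetilde{\Gm}$ (part 1).} Since $\mathcal{F}\subseteq\mathcal{C}_{\mathrm{sec}}$ and $\pi$ is injective, $\pi(\mathcal{F})$ is a $k'$-dimensional subspace of $\F_q^{km}$. It therefore has an information set of size $k'$ among these $km$ positions. Let $\Qm$ be the permutation matrix moving these positions to the front. Gaussian elimination then gives a generator matrix $(\Im_{k'}\;\Am_1)\Qm$ of $\pi(\mathcal{F})$.

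Lifting each row $\uv$ to $(\uv,\uv\Rm)$ yields a generator matrix of $\mathcal{F}$ of the stated form, with $\Am_2=(\Im_{k'}\;\Am_1)\Qm\Rm$. Conversely, for any such normal form, injectivity of $\pi$ forces $\Am_2$ to equal this product.

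\textbf{Step 3: completion and uniqueness of $\Am_3$ (part 2).} Set $\Am_3:=(\zz\;\Im_{km-k'})\Qm\Rm$. Then
\[
\widehat{\Gm}=\begin{pmatrix}\Im_{k'}&\Am_1\\ \zz&\Im_{km-k'}\end{pmatrix}\Qm\,(\Im_{km}\;\Rm).
\]
The left factor is an invertible $km\times km$ matrix: a block upper unitriangular matrix times a permutation. Hence $\widehat{\Gm}$ generates $\mathcal{C}_{\mathrm{sec}}$.

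For uniqueness, suppose some $\Am_3'$ also makes $\widehat{\Gm}$ a generator matrix of $\mathcal{C}_{\mathrm{sec}}$. Then each of its last $km-k'$ rows lies in $\mathcal{C}_{\mathrm{sec}}$ and has prescribed first $km$ entries. Since $\pi$ is injective, those rows are uniquely determined, so $\Am_3'=\Am_3$.
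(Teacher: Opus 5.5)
Your proposal is correct and follows essentially the same route as the paper. Lemma~\ref{lem:qary-information-set-general} makes the first $km$ $q$-ary positions an information set, so the projection is injective on $\mathcal{C}_{\mathrm{sec}}$; this yields the normal form of $\widetilde{\Gm}$ by reducing $\pi(\mathcal{F})$ and lifting, and fixes $\Am_3$ as the unique lifts of the rows of $(\zz\;\Im_{km-k'})\Qm$. The differences are cosmetic: you justify that $\{1,\dots,k\}$ is an information set via the MRD (hence MDS) property, which the paper only asserts, and you check generation through the factorization $\widehat{\Gm}=\bigl(\begin{smallmatrix}\Im_{k'}&\Am_1\\ \zz&\Im_{km-k'}\end{smallmatrix}\bigr)\Qm\,(\Im_{km}\;\mat{R})$ instead of the rank of the first $km$ columns.
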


\begin{proof}
Since $\{1,\dots,k\}$ is an information set of
$\mathcal{G}_{\lambdav}(\gv,k)$, Lemma~\ref{lem:qary-information-set-general}
implies that $\{1,\dots,km\}$ is an information set of $\mathcal{C}_{\mathrm{sec}}$. Therefore, if we denote by
$\pi:\F_q^{mn}\longrightarrow \F_q^{km}$
the projection onto the first $km$ coordinates, then 
$\pi_{\mid \mathcal{C}_{\mathrm{sec}}}:\mathcal{C}_{\mathrm{sec}}
\longrightarrow \F_q^{km}$
is an $\F_q$-linear isomorphism. It follows that its restriction to
$\mathcal{F}$ is injective and we have
$\dim_{\F_q}\pi(\mathcal{F})=k'$.
Therefore, $\pi(\mathcal{F})$ admits a generator matrix of the form
$(\Im_{k'}\; \Am_1)\,\Qm$,
for some permutation matrix $\Qm\in\F_q^{km\times km}$ and some
$\Am_1\in\F_q^{k'\times(km-k')}$.
Lifting back to $\mathcal{F}$ yields
\[
\widetilde{\Gm}
=
\begin{pmatrix}
(\Im_{k'}\; \Am_1)\,\Qm & \Am_2
\end{pmatrix},
\]
where $\Am_2\in\F_q^{k'\times(mn-km)}$, which proves the first claim.

For the second claim, let
\[
\uv_j=(\zz\; \ev_j)\,\Qm\in\F_q^{km},
\qquad 1\le j\le km-k',
\]
where $\ev_j$ is the $j$th standard basis vector of $\F_q^{km-k'}$. Since
$\pi_{\mid \mathcal{C}_{\mathrm{sec}}}:\mathcal{C}_{\mathrm{sec}}
\longrightarrow \F_q^{km}$
is bijective, each $\uv_j$ has a unique preimage
$\cv_j\in\mathcal{C}_{\mathrm{sec}}$.

Let the last $mn-km$ coordinates of $\cv_j$ form the $j$th row of a matrix
\[
\Am_3\in\F_q^{(km-k')\times(mn-km)}.
\]
Then all rows of
\[
\widehat{\Gm}
=
\begin{pmatrix}
(\Im_{k'}\; \Am_1)\,\Qm & \Am_2\\
(\zz\; \Im_{km-k'})\,\Qm & \Am_3
\end{pmatrix}
\]
belong to $\mathcal{C}_{\mathrm{sec}}$.

Moreover, the submatrix of $\widehat{\Gm}$ formed by its first $km$ columns has rank $km$. Hence the rows of $\widehat{\Gm}$ are linearly
independent. Since $\dim_{\F_q}(\mathcal{C}_{\mathrm{sec}})=km$,
$\widehat{\Gm}$ generates $\mathcal{C}_{\mathrm{sec}}$.

Uniqueness follows again from the bijectivity of
$\pi_{\mid \mathcal{C}_{\mathrm{sec}}}$. Once
\[
\widetilde{\Gm}
=
\begin{pmatrix}
(\Im_{k'}\; \Am_1)\,\Qm & \Am_2
\end{pmatrix}
\]
is fixed, each vector $(\zz\; \ev_j)\Qm$ has a unique preimage in
$\mathcal{C}_{\mathrm{sec}}$, hence each row of $\Am_3$ is uniquely determined.
\end{proof}

Theorem~\ref{thm:completion-normal-form} shows that, once a normal form of the
public subcode has been fixed, recovering the hidden expanded code amounts to
recovering the unique matrix $\Am_3$.

The attack therefore consists in fixing a generator matrix
$\widetilde{\Gm}
=
\begin{pmatrix}
(\Im_{k'}\; \Am_1)\,\Qm & \Am_2
\end{pmatrix}$
of the public subcode in the normal form of
Theorem~\ref{thm:completion-normal-form}, and then do an exhaustive search on $\Am_3$ in order to complete $\widetilde{\Gm}$ to have $\widehat{\Gm}$. Note that the number of candidate matrices for
$\Am_3\in\F_q^{(km-k')\times(mn-km)}$
is
$q^{(km-k')(mn-km)}$.
Consequently, an exhaustive completion-search attack requires
\[
q^{(km-k')(mn-km)}
\]
candidate tests, up to polynomial factors and the cost of the distinguisher
used to validate each candidate.

This generator-completion attack primarily recovers the hidden expanded code
\(
\mathcal{C}_{\mathrm{sec}}
\),
and therefore already distinguishes the public subcode from a random code.
Combined with a reconstruction procedure for q-ary images, such as those
considered in~\cite{bergerMC17}, it may also lead to an equivalent secret
description.

For distinguishing purposes, however, one can reduce the cost of the completion
search by puncturing the public subcode and keeping only $k+1$
extension-field coordinates. This is in the same spirit as the puncturing-based
dimension reduction used in~\cite{aragonMR24} in the context of structural
distinguishers.

\begin{corollary}\label{cor:completion-Distinguishing}
Let
$\mathcal{F}=\phiBvec(\mathcal{D})\subseteq
\mathcal{C}_{\mathrm{sec}}
=
\phiBvec\!\left(\mathcal{G}_{\lambdav}(\gv,k)\right)$
be the public subcode, of dimension $k'$. Let
$\mathcal{F}^{\mathrm{pun}}\subseteq \F_q^{m(k+1)}$
be the puncturing of $\mathcal{F}$ obtained by keeping only the first $k+1$
extension-field coordinates, i.e., the first $m(k+1)$ $q$-ary coordinates.
Then \(\dim_{\F_q}(\mathcal{F}^{\mathrm{pun}})=k'\), and
\(\mathcal{F}^{\mathrm{pun}}\) is an \(\F_q\)-linear subcode of
$\phiBvec\!\left(\mathcal{G}_{\lambdav'}(\gv',k)\right)$,
where \(\gv'\) and \(\lambdav'\) denote the truncations of \(\gv\) and
\(\lambdav\) to their first \(k+1\) entries. 

Consequently,
Theorem~\ref{thm:completion-normal-form} applies with \(n\) replaced by
\(k+1\), so that the completion search space drops
to
\[
\mathcal{C}_{\mathrm{dist}} = q^{m(km-k')}.
\]
\end{corollary}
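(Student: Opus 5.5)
The plan is to show three things in turn: the punctured code is a subcode of the expanded $\lambdav'$-Gabidulin code, the puncturing keeps dimension $k'$, and the completion count then follows by substitution.

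\textbf{Step 1: subcode inclusion.} Let $\rho:\F_{q^m}^n\to\F_{q^m}^{k+1}$ be the projection onto the first $k+1$ coordinates. Since the expansion is coordinatewise, $\rho$ commutes with $\phiBvec$. That is, keeping the first $m(k+1)$ $q$-ary positions of $\phiBvec(\cv)$ gives exactly $\phiBvec(\rho(\cv))$, with $\phiBvec$ now the expansion on $\F_{q^m}^{k+1}$. Hence $\mathcal{F}^{\mathrm{pun}}=\phiBvec(\rho(\mathcal{D}))$. Puncturing the generator matrix $\Gm_{\lambdav}=\Gm\Delta$ to its first $k+1$ columns gives the Moore matrix on $\gv'=(g_1,\dots,g_{k+1})$ times $\operatorname{Diag}(\lambda_1,\dots,\lambda_{k+1})$. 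So $\rho(\mathcal{G}_{\lambdav}(\gv,k))=\mathcal{G}_{\lambdav'}(\gv',k)$. The entries of $\gv'$ are still $\F_q$-linearly independent, and $k\le k+1\le n\le m$, so this is a genuine $\lambdav'$-Gabidulin code of dimension $k$. It follows that $\rho(\mathcal{D})\subseteq\mathcal{G}_{\lambdav'}(\gv',k)$, which gives the inclusion after applying $\phiBvec$.

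\textbf{Step 2: dimension is preserved.} This is the main point. I will use that $\{1,\dots,k\}$ is an information set of $\mathcal{G}_{\lambdav}(\gv,k)$. Indeed, the $k\times k$ Moore matrix on $g_1,\dots,g_k$ is invertible because these entries are $\F_q$-independent, and the $\lambda_i$ are nonzero. Therefore $\rho$ restricted to $\mathcal{G}_{\lambdav}(\gv,k)$ is injective, since it factors through the projection onto $\{1,\dots,k\}$. In particular it is injective on $\mathcal{D}$, so $\dim_{\F_q}\rho(\mathcal{D})=k'$. Since $\phiBvec$ is an isomorphism, $\dim_{\F_q}\mathcal{F}^{\mathrm{pun}}=k'$. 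Equivalently, one can cite Lemma~\ref{lem:qary-information-set-general}: the first $km$ $q$-ary positions already determine codewords of $\mathcal{C}_{\mathrm{sec}}$, and these positions survive the puncturing.

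\textbf{Step 3: completion count.} With Steps 1 and 2, all hypotheses of Theorem~\ref{thm:completion-normal-form} hold for $\mathcal{F}^{\mathrm{pun}}\subseteq\phiBvec(\mathcal{G}_{\lambdav'}(\gv',k))$, with length $n$ replaced by $k+1$. The unknown block $\Am_3$ then lies in $\F_q^{(km-k')\times(m(k+1)-km)}=\F_q^{(km-k')\times m}$. The number of candidates is therefore $q^{m(km-k')}$, as claimed.

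The only real obstacle is Step 2, ensuring that puncturing does not collapse the dimension; the information-set argument above handles it. A minor care point is that $k+1\le n$ is needed for the truncation to make sense, which I will state as implicit in the hypotheses.
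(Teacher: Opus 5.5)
Your proposal is correct and follows essentially the same route as the paper. It shows injectivity of the projection on $\mathcal{C}_{\mathrm{sec}}$ via the information set $\{1,\dots,k\}$ (equivalently Lemma~\ref{lem:qary-information-set-general}), identifies the punctured parent code as $\mathcal{G}_{\lambdav'}(\gv',k)$, and substitutes length $k+1$ into Theorem~\ref{thm:completion-normal-form} to obtain the $(km-k')\times m$ unknown block. One small wording fix: it is the projection onto $\{1,\dots,k\}$ that factors through $\rho$, not the reverse, and that is exactly why its injectivity forces the injectivity of $\rho$.
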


\begin{proof}
By Lemma~\ref{lem:qary-information-set-general}, the first \(km\) $q$-ary
coordinates form an information set of
\[
\mathcal{C}_{\mathrm{sec}}
=
\phiBvec\!\left(\mathcal{G}_{\lambdav}(\gv,k)\right).
\]
Hence the projection onto the first \(m(k+1)\) $q$-ary coordinates is injective
on \(\mathcal{C}_{\mathrm{sec}}\), and therefore also on its subcode
\(\mathcal{F}\). It follows that
\[
\dim_{\F_q}(\mathcal{F}^{\mathrm{pun}})=\dim_{\F_q}(\mathcal{F})=k'.
\]

Moreover, puncturing \(\mathcal{G}_{\lambdav}(\gv,k)\) on the last \(n-k-1\)
extension-field coordinates yields the \(\lambdav'\)-Gabidulin code
\(\mathcal{G}_{\lambdav'}(\gv',k)\) of length \(k+1\). Thus
\(\mathcal{F}^{\mathrm{pun}}\) is an \(\F_q\)-linear subcode of
$\phiBvec\!\left(\mathcal{G}_{\lambdav'}(\gv',k)\right)$.
Applying Theorem~\ref{thm:completion-normal-form} with length \(k+1\) gives an
unknown block \(\Am_3\) of size
\[
(km-k')\times\bigl(m(k+1)-km\bigr)=(km-k')\times m,
\]
hence
$q^{m(km-k')}$
candidates.
\end{proof}

Thus, while the full completion attack is naturally viewed as a recovery attack
for the hidden expanded code, its punctured version provides a strictly cheaper
structural distinguisher.

\subsection{Decoding Attacks}

We begin by recalling the MinRank problem.

\begin{definition}[MinRank problem]
Let $q$ be a prime power and let $m,n,K,r \in \mathbb{N}$.
Given matrices
\[
\Am_1,\dots,\Am_K \in \mathbb{F}_q^{m \times n},
\]
the $\mathsf{MinRank}(q,m,n,K,r)$ problem consists in finding scalars
$x_1,\dots,x_K \in \mathbb{F}_q$ such that
\[
\rk \left(\sum_{i=1}^{K} x_i \Am_i\right) \leq r .
\]
\end{definition}

Decoding a matrix code can be reduced to an instance of the MinRank
problem~\cite{faugereMR08}. The corresponding decision problem was shown to be
NP-complete in~\cite{buss99}. The MinRank problem is therefore commonly used as
a hardness assumption in code-based cryptography~\cite{petzoldt15,beullens21,aragonMR24}.

In the present setting, once the public code is assumed to behave like a random
matrix code, message recovery attack reduces to solving a generic instance of the
$\mathsf{MinRank}(q,m,n,K,r)$ problem, with
\[
r=t_{\mathrm{pub}}
\qquad\text{and}\qquad
K=k'+1.
\]

We now review the main generic attacks considered in our estimates.

\paragraph{\bf The Kernel Attack}
The kernel attack introduced in~\cite{goubin00}
is one of the standard generic approaches for solving MinRank instances. Its
principle is to sample vectors in the right ambient space and to retain those
that lie in the kernel of the unknown matrix
\[
\sum_{i=1}^K x_i \Am_i
\]
of rank at most \(r\). Each such vector yields a linear equation in the unknown
coefficients, and once sufficiently many of them have been found, one obtains a
solvable linear system. Its complexity is
\[
\mathcal{C}_{\mathrm{ker}}
=
\mathcal{O}\!\left(q^{r\lceil K/m\rceil}K^{\omega}\right),
\]
where \(\omega\) denotes the linear algebra constant. In our estimates, we take \(\omega=2.38\), which is a convenient rounded value
for the Coppersmith--Winograd matrix-multiplication exponent~\cite{copper87}.

\paragraph{\bf The Support Minors Attack}
The support-minors approach, introduced in~\cite{bardet22}, provides an algebraic system for solving MinRank
instances. The complexity expression is taken under the rank assumption on the corresponding Macaulay matrix given in ~\cite{bardet22}, \cite{bros22}. It is then given by

\[
\mathcal{C}_{\mathrm{alg}}
=
\mathcal{O}\!\left(
\min\!\left(
E_b(q,m,n,K,r)\,U_b(q,m,n,K,r)^{\omega-1},
K(r+1)\,E_b(q,m,n,K,r)\,U_b(q,m,n,K,r)
\right)
\right),
\]

\

where $b$ is the smallest positive integer such that
\[
1 \le b < r+2
\qquad\text{and}\qquad
U_b(q,m,n,K,r)-1 \le E_b(q,m,n,K,r),
\]
and where:

If $q>2$, then 
\[
E_b(q,m,n,K,r)
:=
\sum_{i=1}^{b}(-1)^{i+1}
\binom{n}{r+i}
\binom{m+i-1}{i}
\binom{K+b-i-1}{b-i},
\]
and 
\[
U_b(q,m,n,K,r)
:=
\binom{K+b-1}{b}\binom{n}{r}.
\]

If $q=2$, then
\[
E_b(q,m,n,K,r)
:=
\sum_{j=1}^{b}\sum_{i=1}^{j}(-1)^{i+1}
\binom{n}{r+i}
\binom{m+i-1}{i}
\binom{K}{j-i},
\]
and
\[
U_b(q,m,n,K,r)
:=
\sum_{j=1}^{b}\binom{n}{r}\binom{K}{j}.
\]
Here, \(\binom{a}{b}\) denotes the binomial coefficient.

\paragraph{\bf The Hybrid Approach}
To improve the efficiency of attacks against the MinRank problem, a generic
hybrid strategy was proposed in~\cite{bardet23}. The idea is to reduce the
original instance to several smaller sub-instances that can be solved more
efficiently. Let $\mathcal{A}$ be an algorithm for solving the MinRank problem,
and let
\[
TC_{\mathcal{A}(q,m,n,K,r)}
\]
denote its time complexity on an instance with parameters \((q,m,n,K,r)\).
Using the hybrid technique of~\cite{bardet23}, the resulting complexity becomes
\[
TC_{\mathcal{A}\text{-}\mathrm{hybrid}(q,m,n,K,r)}
=
\min_{0\le a<\lceil K/m\rceil}
\left(
q^{ar}\cdot
TC_{\mathcal{A}(q,m,n-a,K-am,r)}
\right).
\]

\paragraph{\bf A Subsupport-Reduction Hybrid Approach}
A recent idea, introduced in~\cite{couvee25} in the context of the Rank
Decoding problem, is to use subsupport reduction as a preprocessing step.
We combine this reduction with the generic hybrid strategy recalled above. The
main idea is to guess a subsupport, that is, an $h$-dimensional subspace
contained in the rank support of the unknown low-rank solution. This guess
reduces the MinRank instance from parameters $(q,m,n,K,r)$ to
$(q,m,n-h,K,r-h)$ before applying a solver to the reduced instance.

Let \(\mathcal{A}\) be an algorithm for solving the MinRank problem, and let
$TC_{\mathcal{A}(q,m,n,K,r)}$
denote its time complexity on an instance with parameters \((q,m,n,K,r)\). In our MinRank adaptation, we model the average cost of the subsupport-reduction
step by
\[
\mathcal{C}_{\mathrm{sub\text{-}red}}
=
\mathcal{O}\!\left(
q^{h(n-r)}\cdot
TC_{\mathcal{A}(q,m,n-h,K,r-h)}
\right).
\]
Combining this reduction with the hybrid strategy of~\cite{bardet23} yields the
complexity
\[
\mathcal{C}_{\mathrm{new}\text{-}\mathsf{hyb}}
=
\min_{(a,h)\in B'(q,m,n,K,r)}
q^{h(n-r)+a(r-h)}
\cdot
TC_{\mathcal{A}(q,m,n-h-a,K-am,r-h)},
\]
where we set
\[
B'(q,m,n,K,r)
=
\{0,\ldots,\lceil K/m\rceil-1\}\times\{0,\ldots,r-1\}.
\]

When \(\mathcal{A}\) is instantiated with the support-minors attack
\(\mathsf{SM}\), by letting
$0\le a<\left\lceil \frac{K}{m}\right\rceil$, 
$0\le h<r$, and 
$1\le b< r-h+2$,
we define
\[
\begin{aligned}
B(q,m,n,K,r)
:=
\Bigl\{(a,b,h)\; ;\;
U_b(q,m,n-a-h,K-am,r-h)-1 \\
\le
E_b(q,m,n-a-h,K-am,r-h)
\Bigr\}.
\end{aligned}
\]
Then,
\[
\mathcal{C}_{\mathrm{new}\text{-}\mathrm{hyb}\text{-}\mathsf{SM}}
=
\mathcal{O}\!\left(
\min\{
\mathcal{C}_{\mathrm{new}\text{-}\mathrm{hyb}\text{-}\mathsf{SM1}},
\mathcal{C}_{\mathrm{new}\text{-}\mathrm{hyb}\text{-}\mathsf{SM2}}
\}
\right),
\]
where
\[
\begin{aligned}
\mathcal{C}_{\mathrm{new}\text{-}\mathrm{hyb}\text{-}\mathsf{SM1}}
=
\mathcal{O}~\!\Big(
\min_{(a,b,h)\in B(q,m,n,K,r)}
&q^{h(n-r)+a(r-h)} \\
&\cdot
E_b(q,m,n-h-a,K-am,r-h) \\
&\cdot
U_b(q,m,n-h-a,K-am,r-h)^{\omega-1}
\Big),
\end{aligned}
\]
and
\[
\begin{aligned}
\mathcal{C}_{\mathrm{new}\text{-}\mathrm{hyb}\text{-}\mathsf{SM2}}
=
\mathcal{O}~\!\Big(
\min_{(a,b,h)\in B(q,m,n,K,r)}
&q^{h(n-r)+a(r-h)}(K-am)(r-h+1) \\
&\cdot
E_b(q,m,n-h-a,K-am,r-h) \\
&\cdot
U_b(q,m,n-h-a,K-am,r-h)
\Big).
\end{aligned}
\]

When \(\mathcal{A}\) is instantiated with the kernel attack \(\mathsf{ker}\),
the resulting complexity is
\[
\mathcal{C}_{\mathrm{new}\text{-}\mathrm{hyb}\text{-}\mathsf{ker}}
=
\mathcal{O}~\!\left(
\min_{(a,h)\in B'(q,m,n,K,r)}
q^{h(n-r)+a(r-h)}
\,
q^{(r-h)\left\lceil (K-am)/m \right\rceil}
\,
(K-am)^{\omega}
\right).
\]

Finally, taking into account the distinguishing complexity
\(\mathcal{C}_{\mathrm{dist}}\) given in
Corollary~\ref{cor:completion-Distinguishing}, the overall work factor is
\begin{equation}\label{fin_compl}
\mathcal{C}_{\mathrm{f}}
=
\min\bigl\{
\mathcal{C}_{\mathrm{new}\text{-}\mathrm{hyb}\text{-}\mathsf{SM}},
\mathcal{C}_{\mathrm{new}\text{-}\mathrm{hyb}\text{-}\mathsf{ker}},
\mathcal{C}_{\mathrm{dist}}
\bigr\},
\end{equation}
where \(r=t_{\mathrm{pub}}\) and \(K=k'+1\).

\section{Proposed Parameters}\label{sec_prop_cle}

In this section, we propose parameter sets for the
$\mathsf{LGS}$-Niederreiter cryptosystem. Let
$\mathcal{G}(\gv,k)$ denote the Gabidulin code associated with the
$\lambdav$-Gabidulin code $\mathcal{G}_{\lambdav}(\gv,k)$, and let
\[
\delta=\rk(\lambdav^{-1}),
\qquad
\lambdav^{-1}=(\lambda_1^{-1},\dots,\lambda_n^{-1}).
\]
Then
\[
t_{\mathrm{pub}}=\left\lfloor \frac{n-k}{2\delta}\right\rfloor.
\]
The proposed parameter sets are selected according to the attack complexities
discussed in the previous section.

\subsection{Parameters for the \texorpdfstring{$\mathsf{LGS}$}{LGS}-Niederreiter Cryptosystem}

As in~\cite{aragonMR24}, we focus on the Niederreiter variant in order to
minimize the ciphertext size. The public key consists mainly of a systematic
parity-check matrix of the public code, and can therefore be stored using
\[
k'(mn-k')\log_2(q)\ \text{bits}.
\]
The ciphertext size is
\[
(mn-k')\log_2(q)\ \text{bits}.
\]
Tables~\ref{tab:128bit}, \ref{tab:192bit}, and \ref{tab:256bit} provide some parameter sets for \(128\)-, \(192\)-, and \(256\)-bit
security levels, together with the corresponding values of
\(\mathcal{C}_{\mathrm{f}}\) from~(\ref{fin_compl}). The two last columns of the tables contains the public key and ciphertext sizes.  

\begin{table}[!ht]
\centering

\caption{$\mathsf{LGS}$-Niederreiter: 128 bits security\label{tab:128bit}}
\begin{tabular}{@{} c c c c c c c r r S[table-format=3.2]@{}}
\toprule
 $q$ & $\delta$ & $m{=}n$ & $k$ & $k'$ & $t_\mathrm{pub}$ & $\mathcal{C}_{\mathrm{f}}$ & \textbf{pk (kB)} & \textbf{ct (B)}\\
\midrule
2 & 1 & 38 & 30 & 1125 & 4 & 131 & 44.86 & 40 \\
8 & 1 & 20 & 14 & 270 & 3 & 135 & 13.16 & 49 \\
2 & 1 & 34 & 24 & 800 & 5 & 131 & 35.60 & 45 \\
16 & 1 & 17 & 11 & 183 & 3 & 142 & 9.70 & 53 \\
2 & 1 & 32 & 18 & 564 & 7 & 137 & 32.43 & 58 \\
2 & 2 & 46 & 30 & 1360 & 4 & 132 & 128.52 & 95 \\
8 & 2 & 30 & 18 & 528 & 3 & 145 & 73.66 & 140 \\
\bottomrule
\end{tabular}
\end{table}

\begin{table}[!ht]
\centering

\caption{$\mathsf{LGS}$-Niederreiter: 192 bits security\label{tab:192bit}}
\begin{tabular}{@{} c c c c c c c r r S[table-format=3.2]@{}}
\toprule
 $q$ &$\delta$ & $m{=}n$ & $k$ & $k'$ & $t_\mathrm{pub}$ & $\mathcal{C}_{\mathrm{f}}$ & \textbf{pk (kB)} & \textbf{ct (B)}\\
\midrule
8 & 1 & 27 & 21 & 557 & 3 & 199 & 35.93 & 65 \\
16 & 1 & 23 & 17 & 381 & 3 & 213 & 28.19 & 74 \\
2 & 1 & 39 & 23 & 880 & 8 & 195 & 70.51 & 81 \\
16 & 1 & 21 & 11 & 221 & 5 & 213 & 24.31 & 110 \\
2 & 2 & 62 & 46 & 2822 & 4 & 196 & 360.51 & 128 \\
8 & 2 & 37 & 25 & 910 & 3 & 203 & 156.63 & 173 \\
\bottomrule
\end{tabular}

\end{table}

\begin{table}[!ht]
\centering
\caption{$\mathsf{LGS}$-Niederreiter: 256 bits security\label{tab:256bit}}
\begin{tabular}{@{} c c c c c c c r r S[table-format=3.2]@{}}
\toprule
 $q$ & $\delta$ & $m{=}n$ & $k$ & $k'$ & $t_\mathrm{pub}$ & $\mathcal{C}_{\mathrm{f}}$ & \textbf{pk (kB)} & \textbf{ct (B)}\\
\midrule
2 & 1 & 59 & 49 & 2881 & 5 & 259 & 216.07 & 75 \\
2 & 1 & 47 & 31 & 1434 & 8 & 260 & 138.92 & 97 \\
2 & 1 & 49 & 35 & 1695 & 7 & 257 & 149.58 & 89 \\
8 & 1 & 27 & 17 & 447 & 5 & 265 & 47.27 & 106 \\
16 & 1 & 24 & 14 & 324 & 5 & 276 & 40.82 & 126 \\
8 & 2 & 44 & 32 & 1388 & 3 & 269 & 285.23 & 206 \\
\bottomrule
\end{tabular}
\end{table}

\subsection{Comparison with Other Schemes}
In the following, we compare the public key and ciphertext sizes of our scheme with those of other encryption schemes. The comparisons are summarized in tables~\ref{tab:128bit_compa}, \ref{tab:192bit_compa}, and
\ref{tab:256bit_compa} where we compare representative instances of
$\mathsf{LGS}$-Niederreiter with several related cryptosystems  such as $\mathsf{EGMC}$-Niederreiter Cryptosystem  \cite{aragonMR24} based on expanded codes, MinRankPKE \cite{debris25}, Modification I of Loidreau's cryptosystem \cite{24two} and Cryptosystem Based on Equivalent of Subcodes of Gabidulin Matrix Codes (Cryptosystem $\mathsf{BESG}$)  \cite{berger17} at
\(128\)-, \(192\)-, and \(256\)-bit security levels.

\begin{table}[!ht]
\centering
\caption{Comparison with other cryptosystems, security: 128 bits}
\label{tab:128bit_compa}

\begin{tabular}{ l @{\hspace{8pt}} r @{\hspace{8pt}} r}
\toprule
 \textbf{Cryptosystems} & \textbf{pk (kB)} & \textbf{ct (B)} \\
\midrule
\textbf{$\mathsf{LGS}$-Niederreiter} & 44.86 & \textbf{40} \\
\textbf{$\mathsf{LGS}$-Niederreiter} & 9.70 & \textbf{53} \\
 Cryptosystem $\mathsf{BESG}$ \cite{bergerMC17} & 11.30 & 54  \\ 
  $\mathsf{EGMC}$-Niederreiter  \cite{vinc25} & 97.82 & 65 \\
  $\mathsf{Classic\;McEliece}$ \cite{bernstein17} & 261.12 & 96 \\
   Modification I of Loidreau's cryptosystem \cite{24two} & 3.67 & 105 \\
    $\mathsf{RQC}$-$\mathsf{Block}$-$\mathsf{NH}$-$\mathsf{MS}$-$\mathsf{AG}$~\cite{aragon2024} & 0.31\, & 1118 \\
 $\mathsf{BIKE}$~\cite{melchor171} & 1.54 & 1572 \\
  $\mathsf{RQC}$-$\mathsf{NH}$-$\mathsf{MS}$-$\mathsf{AG}$~\cite{bidoux23} & 0.42 & 2288 \\
 MinRankPKE \cite{debris25} & 14.70  & 14158 \\
\bottomrule
\end{tabular}
\end{table}

\begin{table}[!ht]
\centering
\caption{Comparison with other cryptosystems, security: 192 bits\label{tab:192bit_compa}}
\begin{tabular}{ l @{\hspace{8pt}} r @{\hspace{8pt}} r}
\toprule
 \textbf{Cryptosystems} & \textbf{pk (kB)} & \textbf{\;\;ct (B)} \\
\midrule
\textbf{$\mathsf{LGS}$-Niederreiter} & 35.93 & \textbf{65} \\
\textbf{$\mathsf{LGS}$-Niederreiter} & 28.19 & \textbf{74} \\
 $\mathsf{EGMC}$-Niederreiter \cite{vinc25} & 267.80 & 89 \\
 $\mathsf{Classic\;McEliece}$~\cite{bernstein17} & 524.16 & 156 \\
 Modification I of Loidreau's cryptosystem \cite{24two} & 5.48 & 457 \\
 $\mathsf{RQC}$-$\mathsf{Block}$-$\mathsf{NH}$-$\mathsf{MS}$-$\mathsf{AG}$~\cite{aragon2024} & 0.62 & 2278 \\
 $\mathsf{BIKE}$~\cite{melchor171} & 3.08 & 3024 \\
 $\mathsf{RQC}$-$\mathsf{NH}$-$\mathsf{MS}$-$\mathsf{AG}$~\cite{bidoux23} & 0.98 & 3753 \\
 MinRankPKE \cite{debris25} & 35.37 & 35365 \\

\bottomrule
\end{tabular}
\end{table}

\begin{table}[!ht]
\centering
\caption{Comparison with other cryptosystems, security: 256 bits\label{tab:256bit_compa}}
\begin{tabular}{ l @{\hspace{8pt}} r @{\hspace{8pt}} r}
\toprule
 \textbf{Cryptosystems} & \textbf{pk (kB)} & \textbf{\;\;ct (B)}\\
\midrule
 \textbf{$\mathsf{LGS}$-Niederreiter} & 216.07 & \textbf{75} \\
 \textbf{$\mathsf{LGS}$-Niederreiter} & 47.27 & \textbf{106} \\
 $\mathsf{EGMC}$-Niederreiter \cite{vinc25} & 274.29 & 139 \\
 $\mathsf{Classic\;McEliece}$~\cite{bernstein17} & 1 044.99 & 208 \\
 Cryptosystem $\mathsf{BESG}$ \cite{bergerMC17} & 136.58 & 265  \\
 Modification I of Loidreau's cryptosystem \cite{24two} & 8.70 & 622\\
 $\mathsf{BIKE}$~\cite{melchor171} & 5.12 & 5153 \\
 MinRankPKE \cite{debris25} & 62.89 & 62700 \\
\bottomrule
\end{tabular}
\end{table}

For all parameter sets considered here, $\mathsf{LGS}$-Niederreiter achieves
the smallest ciphertext size among the compared schemes. Moreover, some of the
proposed parameter sets provide a favorable public-key/ciphertext trade-off
compared with MinRankPKE, $\mathsf{EGMC}$-Niederreiter,
$\mathsf{BESG}$ cryptosystem, and $\mathsf{Classic\;McEliece}$.
These results indicate that $\mathsf{LGS}$-Niederreiter is a promising option
for communication-constrained settings.

\newpage
\section{Conclusion}\label{sec:conclusion}

In this paper, we studied subcodes of $\lambdav$-Gabidulin codes and analyzed
their suitability for cryptographic use. We showed that several natural
families of subcodes may retain visible algebraic invariants after $q$-ary
expansion, which makes them poor candidates for public-key design. This led us
to focus on random-looking $\F_q$-linear subcodes obtained through a simple
generator-matrix construction.

Using this construction, we proposed McEliece-like and Niederreiter-like
encryption schemes in the MinRank setting. We then analyzed their security
against structural and generic decoding attacks, including stabilizer-based
distinguishers, an Overbeck-like distinguisher, generator-completion attacks,
and the best currently known MinRank attacks. This analysis allowed us to
derive concrete parameters for the $\mathsf{LGS}$-Niederreiter cryptosystem.

The resulting scheme achieves very compact ciphertexts while keeping public-key
sizes competitive with those of related proposals. Overall, our results suggest
that $\F_q$-linear subcodes of $\lambdav$-Gabidulin codes provide a promising
framework for rank-metric encryption.

Several questions remain open. In particular, it would be valuable to obtain
sharper structural criteria for random $\F_q$-linear subcodes, especially to
better understand when their matrix images behave like random matrix codes.
Another important direction is to clarify the complexity of recovering a hidden
parent expanded code, or an equivalent description, from a public subcode.
More broadly, it would be interesting to determine whether the same approach
can be extended to other structured rank-metric primitives.
    
\bigskip

\bibliographystyle{IEEEtran}
\bibliography{codecrypto2}


\end{document}